%% file: main_arxiv.tex
\documentclass[sigconf,nonacm]{aamas}
\usepackage{balance} % for balancing columns on the final page

\usepackage{subcaption}
\usepackage{graphicx} % Required for inserting images

\usepackage{amssymb}
\usepackage{amssymb}
\usepackage{amsthm}
\usepackage{pgfplots}
\usepackage{array}
\usepackage{comment} % Add this in the preamble
\usepackage{tabularx}
\usepackage{enumitem}
\usepackage{algorithm,algpseudocode}
\usepackage{mathtools}
\usepackage{booktabs}
\usepackage{bbm}

\usepgfplotslibrary{fillbetween}
\pgfplotsset{compat=1.18} 
\usetikzlibrary{backgrounds} % for rectangle
\usetikzlibrary{calc, positioning, arrows.meta,positioning,decorations.pathreplacing}

\usetikzlibrary{external}
\setcopyright{ifaamas}
\acmConference[AAMAS '26]{Proc.\@ of the 25th International Conference
on Autonomous Agents and Multiagent Systems (AAMAS 2026)}{May 25 -- 29, 2026}
{Paphos, Cyprus}{C.~Amato, L.~Dennis, V.~Mascardi, J.~Thangarajah (eds.)}
\copyrightyear{2026}
\acmYear{2026}
\acmDOI{}
\acmPrice{}
\acmISBN{}

\acmSubmissionID{1019}

\title[AAMAS-2026 Formatting Instructions]{The Complexity of Strategic Behavior in Primary Elections}

\author{Colin Cleveland}
\affiliation{
  \institution{King's College London}
  \city{London}
  \country{United Kingdom}}
\email{colin.cleveland@kcl.ac.uk}

\author{Bart de Keijzer}
\affiliation{
  \institution{King's College London}
  \city{London}
  \country{United Kingdom}}
\email{bart.de_keijzer@kcl.ac.uk}

\author{Maria Polukarov}
\affiliation{
  \institution{King's College London}
  \city{London}
  \country{United Kingdom}}
\email{maria.polukarov@kcl.ac.uk}

\begin{abstract}

We study the computational complexity of strategic behaviour in primary elections. Unlike direct voting systems, primaries introduce a multi-stage process in which voters first influence intra-party nominees before a general election determines the final winner. While previous work has evaluated primaries via welfare distortion, we instead examine their game-theoretic properties. We formalise a model of primaries under first-past-the-post with fixed tie-breaking and analyse voters' strategic behaviour. We show that determining whether a pure Nash equilibrium exists is $\Sigma_2^{\mathbf P}$-complete, computing a best response is NP-complete, and deciding the existence of subgame-perfect equilibria in sequential primaries is PSPACE-complete. These results reveal that primaries fundamentally increase the computational difficulty of strategic reasoning, situating them as a rich source of complexity-theoretic challenges within computational social choice.

\end{abstract}

\keywords{Computational Social Choice; Algorithmic Game Theory; Strategic voting; Primary elections; PSPACE-completeness}

\ccsdesc[500]{Theory of computation~Algorithmic game theory}
\ccsdesc[500]{Theory of computation~Representations of games and their complexity}
\ccsdesc[300]{Theory of computation~Problems, reductions and completeness}
\newcommand{\BibTeX}{\rm B\kern-.05em{\sc i\kern-.025em b}\kern-.08em\TeX}
\newtheorem{theorem}{Theorem}

\newtheorem{proposition}[theorem]{Proposition}
\newtheorem{example}{Example}[section]
\newtheorem{remark}{Remark}
\newtheorem{definition}{Definition}

\begin{document}

\pagestyle{fancy}
\fancyhead{}
%\titlerunning{Abbreviated paper title}
% If the paper title is too long for the running head, you can set
% an abbreviated paper title here
%
%%%%%%%%%%%%%%%%%%%%%%%%%%%%%For the anonymous 
%\author{Colin Cleveland\inst{1}\orcidID{0000-0003-0909-7518} \and Bart de Keijzer\inst{1}\orcidID{0000-0001-9465-0837} \and Maria Polukarov\inst{1}\orcidID{0000-0002-7421-3012}}
%
%\authorrunning{C. Cleveland et al.}
% First names are abbreviated in the running head.
% If there are more than two authors, 'et al.' is used.
%
%\institute{King's College London, UK
%\email{ {colin.cleveland,bart.de\_keijzer,maria.polukarov\}@kxl.ac.uk}\\ \url{https://www.kcl.ac.uk/informatics} }
%
%%%%%Ende of for anonymous
\maketitle              % typeset the header of the contribution
%
%
%\linenumbers

\input{content}

\section*{Acknowledgements} 
BdK was supported by EPSRC grant EP/X021696/1. MP was supported by EPSRC grant EP/X038351/1. The authors thank the anonymous reviewers for %their 
constructive feedback and 
helpful suggestions.

\balance
\bibliographystyle{ACM-Reference-Format} 
\bibliography{references.bib}

\clearpage

\appendix
\input{appendix}

\end{document}

%% file: content.tex
\newcommand{\party}{\Pi}
\newcommand{\open}{\mathrm{open}}
\newcommand{\closed}{\mathrm{closed}}
\newcommand{\semi}{\mathrm{semi}}
\newcommand{\ind}{\mathrm{ind}}
\newcommand{\Pri}{\Pr\nolimits_i}
\newcommand{\eligible}{\kappa}
\newcommand{\ttutil}{\mathsf{U}}
\newcommand{\vo}{v_{\textbf{odd}}}
\newcommand{\ve}{v_{\textbf{even}}}
\newcommand{\Aodd}{A_{\textbf{odd}}}
\newcommand{\Aeven}{A_{\textbf{even}}}

\section{Introduction}
Democracy is often understood at the level of \emph{electorate $\rightarrow$ government}, yet an equally important question arises within political parties: how should parties select their own candidates?  
Historically, in many systems party elites simply chose nominees, with primaries serving, at best, as informal opinion polls.  
The 1968 U.S.\ Democratic convention, where Hubert Humphrey obtained the nomination without entering any primary, triggered widespread protests and prompted the institutionalisation of primary elections in the United States \citep{polsby1983consequences,cohen2008party}.  
Similar reforms have since spread worldwide: in some countries (e.g.\ the United States, Chile, Argentina) primaries are regulated by the state \citep{carey2003primarias,siavelis2009chile}, while in others (e.g.\ France, Italy) they remain intra-party procedures run by party organizations \citep{lefebvre2017_france_primaries,bernardi2013membership,sb2013italy_primaries}.  
Even in the United Kingdom, where primaries are not formally adopted, major parties increasingly allow supporters to participate in leadership selection \citep{quinn2024_parliamentarians_vs_members}.

Primary elections appear in several institutional variants—\emph{open}, \emph{closed}, \emph{semi-open}, or hybrid systems such as \emph{poll-weighted primaries}, in which opinion surveys influence results.  
These hybrids are especially common in newer democratic countries such as Taiwan and South Korea, where weak party membership structures leads parties to integrate public polling into nomination procedures \citep{hwang2017korea,reidhead2020_intraparty_sk_tw,lee2022presidential}.

Primaries are not only decision procedures but also communication channels between parties and their supporters.  
Yet, they invite strategic behaviour.  
For instance, in Colorado’s 2022 Republican primary, thousands of Democrats reportedly re-registered as unaffiliated voters to oppose Representative Lauren Boebert, ultimately without affecting the outcome \citep{cpr2022boebert}.  
Such ``crossover'' behaviour illustrates that additional choice can create incentives for manipulation and unintended losses.

From the party’s perspective, primaries may fail to maximize general-election performance, as polarizing candidates can prevail.  
From the voter’s perspective, they offer opportunities to reshape competition, but at non-trivial computational and informational costs.  
These observations motivate a formal analysis of the strategic structure of primaries.

%\medskip
\noindent\textbf{Our contribution.}  
We study primary elections through the lens of computational social choice and algorithmic game theory.  
Specifically, we:
\begin{itemize}
    \item Formalize a two-stage election model in which voters’ primary and general-election choices constitute explicit strategies (Section~\ref{sec:framework});
    \item Characterize the computational complexity of computing best responses, verifying, and deciding the existence of equilibria (Section~\ref{sec:BSEQ});
    \item Extend the framework to sequential multi-stage primaries, showing how temporal conditioning increases strategic complexity (Section~\ref{sec:sequential-BS}).
\end{itemize}

Our results show that multi-stage primaries amplify the reasoning burden faced by strategic agents: best-response computation is NP-complete, equilibrium existence rises to $\Sigma_2^{\mathbf P}$-completeness, and sequential conditioning yields PSPACE-completeness.  
Beyond elections, these results connect to broader questions in multi-agent reasoning—how decomposing a collective decision into stages expands the logical depth of equilibrium analysis within the polynomial hierarchy.

\paragraph{Complexity overview.}
We use the complexity classes NP, $\Sigma_2^{\mathbf P}$, and PSPACE.
NP consists of decision problems of the form $\exists x\,\varphi(x)$ with $\varphi$ computable in polynomial time.
The class $\Sigma_2^{\mathbf P}$ consists of problems of the form $\exists x\,\forall y\,\varphi(x,y)$, with $\varphi$ computable in polynomial time.
PSPACE consists of decision problems decidable in polynomial space; equivalently, it consists of problems expressible as $Q_1 x_1 \cdots Q_{p(n)} x_{p(n)}\,\varphi(x_1,\dots,x_{p(n)})$ with polynomially many alternating quantifiers, as in quantified Boolean formulas (\emph{QBF}).
Increasing the number of electoral stages corresponds to increasing quantifier alternation depth.

\section{Related Work}

Most research in computational social choice has focused on \emph{direct elections}, where all candidates compete simultaneously.  
\citet{borodin2024primaries} provide the first formal analysis of \emph{primary elections} in this domain, comparing multi-stage nomination processes with direct elections under various rules.  
Their results demonstrate that primaries can diverge substantially from direct elections both in theory and in empirical distortion analyses.  
Our work extends this direction by examining the \emph{strategic and computational} dimensions of voter behaviour in primaries rather than their welfare implications.

A complementary line of work studies \emph{strategic candidacy}, where candidates decide whether to enter or withdraw.  
This literature, initiated by \citet{dutta2001strategic,dutta2002equilibria} and later developed by \citet{brill2015strategic} and \citet{polukarov2015equilibria}, characterises equilibria over candidate participation decisions.  
Related work on \emph{electoral control}---for example, adding or deleting candidates---originates with \citet{bartholdi1989computational}.  
\citet{harrenstein2021hotelling} unify strategic candidacy and spatial competition in a Hotelling–Downs framework where parties nominate candidates, showing complexity bounds for equilibrium existence and links to facility-location and Voronoi games.  
More recently, \citet{schlotter2025nomination} analyse the \emph{Possible President} problem under Condorcet-consistent rules, exploring computational and parameterized complexity when nominees depend on rival parties’ choices.  
In contrast to these \emph{candidate-centric} perspectives, our focus is on the strategic computation of \emph{voters} within primary systems.

Another relevant stream concerns \emph{iterative voting}, where voters repeatedly adjust their ballots in response to others.  
\citet{meir2010convergence} initiated the study of convergence to equilibrium under plurality, with subsequent extensions by \citet{reijngoud2012plurality}, \citet{obraztsova2013plurality}, and \citet{lev2012convergence}.  
Surveys by \citet{meir2017iterative} and \citet{grandi2019iterative} synthesise this literature, which typically assumes a single-stage election and investigates behavioural convergence.  
Our analysis differs in treating voters as computationally bounded agents: we show that computing a best response is NP-complete once primaries precede the general election, and that equilibrium existence climbs to $\Sigma_2^{\mathbf P}$- or PSPACE-completeness as sequential conditioning is introduced.  
This places primary elections within a broader conversation about the complexity of equilibrium reasoning in multi-agent systems.

Finally, beyond computational studies, the design and consequences of primary elections have been widely examined in the social sciences.  
\citet{kenig2009selecting} surveys candidate-selection methods across parliamentary democracies, while \citet{cohen2008party} and \citet{sides2020representativeness} analyse elite influence and representativeness in U.S.\ primaries.  
These empirical insights complement the computational models developed in AI and game theory, together offering a more complete view of how institutional design shapes strategic reasoning in multi-agent collectives.

\section{Preliminaries}

For $n \in \mathbb{N}$, let $[n] = \{1,2,\dots,n\}$.  
Let $V = [n]$ be the \textit{set of voters} and $A = [m]$ the \textit{set of politicians}.\footnote{In a direct election, $A$ coincides with the candidate set.} 
We write $v_i$ for voter~$i \in V$ and $a_j$ for politician~$j \in A$.

Let $\Pi = [p]$ denote the \textit{set of parties}.  
Each party $k \in \Pi$ has a disjoint set of affiliated politicians $A_k \subseteq A$, so that $A_i \cap A_j = \emptyset$ whenever $i \neq j$.  
Hence every politician belongs to exactly one party.

After all primaries conclude, each party advances a single nominee to the \textit{general election (GE)}.  
The \textit{set of finalists} is thus a profile
\[
c = (c_1,\dots,c_p) \in C := A_1 \times A_2 \times \cdots \times A_p,
\]
where $c_k \in A_k$ is party~$k$’s nominee.

Each voter $i$ has a \textit{cardinal utility function} $u_i : A \to \mathbb{Q}$, where $u_i(a_j)$ is the utility from politician~$a_j$ winning the GE.  
Because $A$ is finite, we identify $u_i$ with its vector representation
\[
(u_{i,1},u_{i,2},\dots,u_{i,m}), \qquad u_{i,j}=u_i(a_j).
\]
The \textit{utility profile} $[u]$ collects all voters’ preferences and fully determines payoffs for every possible GE winner.

\subsection{Voting Framework}\label{sec:framework}

We model the election as a multiple-stage process: party primaries followed by the GE.  
Each party $k \in \Pi$ holds its primary in a fixed order (1,2,\dots,$p$).  
At each stage, voters cast their primary ballots simultaneously, and a unique nominee is declared immediately.  
Unless otherwise stated, both primaries and the GE use the \emph{first-past-the-post (FPTP)} rule: each voter casts one ballot, and the candidate with the most votes wins. Ties are broken according to a fixed, predetermined ordering of candidates (a \emph{fixed-tie breaking list}).

\paragraph{Voter strategies.}
A voter’s behaviour is described by a strategy
\[
s_i = \{b_i, g_i\},
\]
where:
\begin{itemize}
    \item $b_i = (b_{i,1},\dots,b_{i,p})$ is the \textbf{\textit{primary ballot vector}}.  
    Each coordinate $b_{i,k} \in A_k \cup \{0\}$ records either a vote for a politician in $A_k$ or abstention~($0$).
    \item $g_i : C \to A \cup \{0\}$ is the \textbf{\textit{general-election decision function}}.  
    For each finalist profile $c \in C$, it specifies the GE choice (or abstention) of voter~$i$.
\end{itemize}

This reduced form captures only the ex-post behaviour relevant to electoral outcomes, rather than a full extensive-form plan.  
A fully extensive description would need to specify actions for every possible primary path and finalist set—requiring $\Theta(m^{pn})$ space (see Appendix~\ref{appx:extensive-form}).  
The reduced form therefore trades completeness for analytical tractability.

\paragraph{Registration.}
Institutions vary in whether participation requires prior \emph{party registration}.  
In a \emph{closed} primary, each voter affiliates with one party before the season begins; if $i$ registers with party~$k$, only $b_{i,k}$ may be non-zero.  
In an \emph{open} primary, affiliation is not fixed in advance: at each stage~$k$, voter~$i$ decides whether to cast $b_{i,k}$, subject to the constraint that no voter participates twice in the same contest.  
Thus $b_i$ has the same domain in both regimes but different interpretations—static commitment versus contingent choice.  
Administrative details (deadlines, membership rolls, etc.) are abstracted away.

\paragraph{Participation regimes.}
A second institutional choice concerns how many primaries a voter may enter:
\begin{itemize}
    \item \textbf{Single-primary participation:} At most one coordinate of $b_i$ is non-zero.
    \item \textbf{Multiple-primary participation:} Each coordinate may be filled independently,  
    $b_i \in \prod_{k=1}^p (A_k \cup \{0\})$.
\end{itemize}
We adopt the multiple-participation regime as the general case; it subsumes the single-primary setting.

\paragraph{Access model for $g_i$.}
The general-election component $g_i$ may be exponentially large: since $|C| = \prod_{k=1}^p |A_k| = \Theta(m^p)$, an explicit table for $g_i$ would require exponential space.  
We therefore assume a standard RAM model in which $g_i$ is queried as an oracle with cost~$\tau=\tau(m,p)$.  
Our reductions use gadgets admitting constant-time queries ($\tau = O(1)$).  
We also allow \emph{succinct encodings}, where $g_i(c)$ is computed on demand.  
A canonical example is the \emph{listed-preference rule} $g_i(\cdot;L)$, where $L$ is an ordered list over $A\cup\{0\}$: given $c$, the voter selects the first element of $L$ appearing in $c$ (or abstains if $0$ precedes all finalists).  
This encoding uses only $O(m)$ space and supports $O(1)$ queries under the RAM model.

%\medskip
Collecting all voters’ strategies yields the profile $\mathsf{s}=(s_1,\dots,s_n)$.  
The institutional regime then determines the \textit{final winner} $\mathcal{W}(\mathsf{s}) \in A$.  
Throughout, voters are viewed as polynomial-time agents operating under these oracle assumptions.

\subsection{Voter Cost and Utility}

Casting a GE ballot incurs a participation cost $\kappa_i > 0$, and participation in party~$k$’s primary incurs $\kappa_i^k > 0$.  
Given a strategy $s_i$ and the strategies of others $\mathsf{s}_{-i} = (s_1,\dots,s_{i-1},s_{i+1},\dots,s_n)$, the \textit{total utility} of voter $v_i$ equals the utility derived from the eventual GE winner minus the participation costs incurred in the GE and any primaries. Formally,
\[
\ttutil_i(s_i,\mathsf{s}_{-i})
= u_i\!\bigl(\mathcal{W}(s_i,\mathsf{s}_{-i})\bigr)
- \mathbf{1}[g_i \neq 0]\,\kappa_i
- \sum_{k \in [p]} \mathbf{1}[b_{i,k} \neq 0]\,\kappa_i^k,
\]
where $\mathcal{W}(s_i,\mathsf{s}_{-i})$ denotes the GE winner.
Voters act to maximise their expected $\ttutil_i$ given beliefs about others’ strategies.

\subsubsection*{Toy Example: Two Parties with Single-Primary Participation}

Consider $\Pi=\{1,2\}$ with $A_1=\{a,b\}$ and $A_2=\{c,d\}$, and assume alphabetical tie-breaking.  
Politicians are positioned at $[10,6,4,0]$ on a line, with four voters located at $[0,4,6,10]$ and no participation costs.  
Utilities follow $u_i(a_j) = -|x_i - y_j|$ with $x_i$ and $y_j$ the location of voter $i$ and candidate $j$\footnote{Utilities may be shifted by an additive constant without affecting comparisons}.
Figure~\ref{fig:example_toy} illustrates the setting.

Each voter’s strategy is $s_i=\{b_i,g_i\}$, where $g_i$ is expressed as a ranked list:
\[
\begin{aligned}
s_1 &= \{[0,d],[d,c,b,a]\},\quad
s_2 = \{[0,c],[c,b,d,a]\},\\[2pt]
s_3 &= \{[b,0],[b,c,a,d]\},\quad
s_4 = \{[a,0],[a,b,c,d]\}.
\end{aligned}
\]
The finalists are $C=(a,c)$, and the GE winner is $c$, supported by $v_1,v_2,v_3$.  
Hence $[u_1(c),u_2(c),u_3(c),u_4(c)] = [-4,0,-2,-6]$.

\begin{figure}[ht]
\centering
    \includegraphics[width=0.9\linewidth]{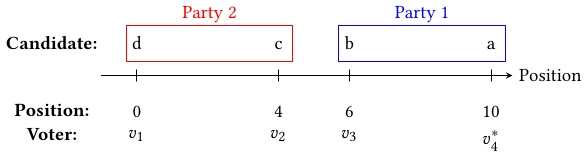}
\caption{Politician and voter positions in the toy example.}
\Description{
Spatial arrangement of politicians and voters in the toy example.
Party 1 (blue) fields politicians $a$ and $b$ at positions 10 and 6, while Party 2 (red) fields $c$ and $d$ at positions 4 and 0. 
Voters $v_1,v_2,v_3,v_4$ are placed correspondingly along the same line. Each voter’s utility is the negative of the distance between her position and the elected candidate’s position, so proximity determines preference orderings. }
\label{fig:example_toy}
\end{figure}

Two strategic votings improve $v_4$’s utility:
\begin{itemize}
    \item \emph{Hostile crossover.}  
    $v_4$ changes $b_4$ in $s_4=(b_4,g_4)$ to $[0,d]$.  
    Then $d$ defeats $c$ in Party~2’s primary ($2{:}1$) and $b$ defeats $a$ in Party~1’s ($1{:}0$), giving $C^*=(b,d)$.  
    In the GE, $b$ beats $d$ ($3{:}1$), yielding $\ttutil_4(s_4^*,s_{-4})=u_4(b)=-4> -6$.
    \item \emph{Honest coordination.}  
    $v_4$ instead votes $b_4=[b,0]$, giving $C^*=(b,c)$.  
    The GE is a $2{:}2$ tie resolved in favour of $b$, again improving $v_4$’s utility to $-4$.
\end{itemize}

These examples show that sincere primary voting need not maximise individual utility.
Other manipulations, such as \emph{honest crossover}, where a voter supports their favourite politician in a rival party’s primary—may also shift the finalist set toward a more desirable overall outcome.

%===[Decision branching]===
%We can already make an analysis here, assuming voters have full information about other voters.
%(Q_1) The algorithm to find the BS.
%(Q_2) The algorithm to find the EQ.
%I guess Q_1 will be polynomial and Q_2 will be NP hard.

\section{Computational Questions Over Primaries}\label{sec:BSEQ}

The framework above gives rise to two central decision problems.

\begin{quote}
\textbf{Question~1 (Best Response).}\\
\emph{Instance:} Voter set $V$, utility profile $[u]$, politician sets $A_1,\dots,A_p$, tie-breaking list $\mathcal{L}$, and strategy profile $\mathsf{s}$.\\
\emph{Task:} For a given voter $i$, compute a strategy $s_i^*$ that maximizes 
$\ttutil_i(s_i^*, \mathsf{s}_{-i})$.
\end{quote}

\begin{quote}
\textbf{Question~2 (Equilibrium).}\\
\emph{Instance:} $V$, $[u]$, and politician sets $A_1,\dots,A_p$.\\
\emph{Task:} Decide whether there exists a strategy profile $\mathsf{s}^*$ such that, for every voter $i$ and every alternative strategy $s_i'$,
\[
\ttutil_i(\mathsf{s}^*) \;\geq\; \ttutil_i(s_i', \mathsf{s}^*_{-i}).
\]
\end{quote}

For clarity, we begin with the simplest timing structure in which all parties hold their primaries simultaneously.  
This reduces the election to two stages: primaries followed by the GE, while retaining the essential strategic components.  
In Section~\ref{sec:sequential-BS}, we relax this assumption and consider the sequential case, where primaries unfold in turn across parties.

\subsection{Best-Response Search Space and Running Time}\label{BR_Analysis}

A naïve way to compute the best response of voter~$i$ is to enumerate all possible strategies.  
Each strategy consists of (i) a primary ballot for every party and (ii) a general-election (GE) decision rule $g_i\!:\, C\!\to\! A\!\cup\!\{0\}$ that \emph{selects among finalists}.
This yields the upper bound
\[
\Bigl(\prod_{k=1}^p (1+|A_k|)\Bigr) \cdot (|A|+1)^{|C|},
\]
since each party ballot offers $|A_k|{+}1$ options (including abstention), and a GE rule must specify an action for each finalist profile, where $|C|=\prod_k m_k=O(m^p)$.  

However, when computing a best response, we never need to enumerate the entire $g_i$.  
Once the finalists $c\!\in\! C$ are known, voter~$i$ simply chooses greedily among $\{c_1,\dots,c_p,0\}$—the available finalists and abstention—thereby reducing the GE decision to $m{+}1$ evaluations.  
Hence the dominant factor in the search space comes from enumerating primary ballots \footnote{When the number of candidates, parties, or primary rounds is small, the question of optimal strategy may be encoded as Presburger arithmetic formulas and thus, may not need enumeration \cite{Koutecky_Talmon_2021}}.

%\smallskip

To analyse the running time, we decompose the computation into two components.

\paragraph{Baseline procedure.}
\begin{enumerate}
  \item \emph{Pre-computation.}  
  For each party~$k$, compute primary tallies from $\mathsf{s}_{-i}$, requiring $O(np)$ time by scanning all other voters once.
  \item \emph{Enumeration of $i$'s primary actions.}  
  For every combination $b_i\!\in\!\prod_k(A_k\cup\{0\})$:
  \begin{enumerate}
    \item Update tallies locally to reflect $b_i$, determine each party’s nominee under FPTP (with fixed tie-breaking), and obtain the finalist tuple $c_x\!\in\! C$.  
    This takes $O(mp)$ per $b_i$, using running arg-maxes for each party.
    \item \emph{Evaluate the GE outcome.}  
    For each $i'\!\neq\! i$, query $g_{i'}(c_x)$ (time $\tau$ per query) and increment that choice’s tally.  
    This step costs $O(n\tau + p)$ per $b_i$ ($p$ accounts for the final arg-max over GE tallies).  
    Voter~$i$ then greedily picks her GE action from $\{c_{x,1},\dots,c_{x,p},0\}$ and computes $\ttutil_i$.
  \end{enumerate}
\end{enumerate}

Aggregating these terms yields the following bound.

\begin{proposition}[Best-response running time]\label{prop:br-runtime}
Let $\tau$ be the time to evaluate a single voter’s GE decision $g_i(c)$ for a given finalist profile $c$.
For a fixed voter $i$, the enumeration algorithm over $b_i \in \prod_{k=1}^p (A_k \cup \{0\})$ runs in
\[
O\bigl(np + (m{+}1)^p (mp + n\tau)\bigr)
\]
For fixed $p$ and constant $\tau$,
\[
T_{\mathrm{BR}}
= O\!\bigl( n + m^{p}\,(mp + n) \bigr)
= O\!\bigl( m^{p+1} + n\,m^{p} \bigr),
\]
so best response is computable in time polynomial in $n$ and $m$.
\end{proposition}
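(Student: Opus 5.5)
The bound follows by charging costs to the steps of the baseline procedure, after first arguing that restricting the search to primary ballots $b_i$ (with a greedy GE choice) loses no optimality.

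\emph{Correctness.} Voter $i$'s utility depends only on the realised finalist profile $c$, on $g_i(c)$, and on $b_i$. Hence, for a fixed $b_i$, the values of $g_i$ on other profiles $c'\neq c$ are irrelevant. So it suffices to choose $g_i(c)$ optimally among $\{c_1,\dots,c_p,0\}$. For each such choice we recompute the GE winner with $i$'s vote added and subtract $\kappa_i$ whenever the choice is non-zero. Consequently $\max_{s_i}\ttutil_i$ equals the maximum over $b_i$ of these greedy values. Any completion of $g_i$ agreeing with the chosen value at $c$, for instance a listed-preference rule placing that choice first, realises it.

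\emph{Cost accounting.} The pre-computation scans each of the $n-1$ other voters' $p$ primary coordinates once, giving $O(np)$. The number of ballots enumerated is $\prod_k(|A_k|+1)\le (m+1)^p$, since each $|A_k|\le m$. For each ballot $b_i$, the work is as follows.
\begin{enumerate}
\item Copy and adjust the tallies and take a per-party arg-max with the fixed tie-breaking list. This costs $O(\sum_k |A_k|)\le O(mp)$, which we bound crudely by $O(mp)$.
\item Query $g_{i'}(c)$ for all $i'\neq i$, costing $O(n\tau)$, and accumulate the GE tallies. Only the $p$ finalists can receive votes, so the arg-max costs $O(p)$.
\item Evaluate the $p+1$ greedy options. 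Each option changes a single tally by one, so the new winner follows from the top two tallies in $O(p)$ total, and $\ttutil_i$ is obtained in $O(p)$.
\end{enumerate}
Summing over ballots gives $O(np+(m+1)^p(mp+n\tau+p))=O(np+(m+1)^p(mp+n\tau))$.

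\emph{Specialisation.} For fixed $p$ and $\tau=O(1)$, we have $(m+1)^p=O(m^p)$ because $(1+1/m)^p\le 2^p$ is a constant, and $np=O(n)$. Therefore $T_{\mathrm{BR}}=O(n+m^p(mp+n))=O(m^{p+1}+nm^p)$. This is polynomial in $n$ and $m$ because the exponent $p$ is constant.

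The main obstacle is step 3. Recomputing the GE arg-max naively for each option would cost $O(p^2)$ per ballot. That overhead is harmless, since $p$ is constant in the stated polynomial bound and $p^2\le mp$ whenever $p\le m$. Still, the cleaner $O(p)$ bound via top-two tallies is worth stating, and it matches the $O(n\tau+p)$ term claimed in the procedure.
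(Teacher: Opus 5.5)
Your proposal is correct and follows essentially the same route as the paper. Both use the $O(np)$ pre-computation, then for each of at most $(m+1)^p$ primary ballots spend $O(mp)$ on nominee determination and $O(n\tau+p)$ on GE evaluation, then apply the same specialisation for fixed $p$ and $\tau=O(1)$; your argument that only $g_i(c)$ at the realised finalist profile matters, and your $O(p)$ accounting for the greedy GE choice, make explicit steps the paper's baseline procedure states without argument.
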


\paragraph{Explicit vs.\ succinct GE behaviour.}
The size of the finalist set is $|C|=\prod_k|A_k|=O(m^p)$.  
Under an \emph{explicit} representation, each $g_i:C\!\to\!A\!\cup\!\{0\}$ is stored as a table of $\Theta(m^p)$ entries with $O(1)$ query time, implying that the input already encodes $\Theta(nm^p)$ information.  
Since $T_{\mathrm{BR}}=O((nm^p)^2)$, the best response remains polynomial in input length even if $p$ grows with~$m$.

By contrast, if $g_j$ is given \emph{succinctly} (e.g., the listed-preference presentation), the input size drops to $O(nm)$.  
The exponential factor $(m+1)^p$ in the action space then re-emerges, making complexity sensitive to the representation gap.

\begin{remark}[Polynomial time versus input size]\label{rem:inputsize}
For fixed~$p$ and succinctly evaluable GE rules, best response is polynomial in $n,m$.  
With explicit~$g$, the input size is already exponential in~$p$, so best response remains polynomial in input length.  
Only when $g$ is succinct do we encounter a genuine complexity increase.
\end{remark}

\paragraph{Special regimes.}
Two simplified variants are worth noting:
\begin{enumerate}
  \item \emph{Single-Participation primaries.}  
  If each voter can participate in at most one primary, the action space reduces to $O(m)$, and best response runs in $O(m^2 + mn)$.
  \item \emph{Fixed GE behaviour.}  
  If a voter’s GE choice is predetermined (e.g., always supporting a fixed party’s nominee), the GE stage adds no computational burden, and the runtime remains as stated in the proposition.
\end{enumerate}

\noindent\textbf{Towards NP-completeness.}  
When GE behaviours are given succinctly, a natural question arises: does a polynomial-time algorithm for best response exist?  
We show that the corresponding decision version is NP-complete.

\begin{definition}[Best-Response-at-Least-$U$ (BR-$\ge U$)]
Under FPTP rule, fix $m=|A|$, $p=|\Pi|$, and a profile $\mathsf{s}_{-i}$ of all voters except $i^\star$.  
The decision problem \textsc{BR-$\ge U$} is defined as follows:

\begin{quote}
\emph{Instance:} An election with parties $\Pi=\{1,\dots,p\}$ and politician sets $A_k$, a fixed tie-breaking list $\mathcal{L}$, other voters’ strategies $\mathsf{s}_{-i^\star}$, and succinct GE functions $g_j:C\to A\cup\{0\}$ (query-able in $O(1)$). 
A utility threshold $U\in\mathbb{Q}$.\\[2pt]
\emph{Question:} Does there exist a strategy $s_{i^\star}$ such that 
$\ttutil_{i^\star}(s_{i^\star},\mathsf{s}_{-i^\star}) \ge U$?
\end{quote}
\end{definition}

\begin{theorem}[NP-completeness of \textsc{BR-$\ge U$}]\label{thm:BR-NPC}
When the number of parties~$p$ is part of the input and GE behaviours are given succinctly, 
\textsc{BR-$\ge U$} is NP-complete with fixed tie-breaking, even with zero participation costs.
\end{theorem}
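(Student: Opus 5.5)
Membership in NP is immediate from the baseline procedure behind Proposition~\ref{prop:br-runtime}. A certificate is the primary ballot vector $b_{i^\star}\in\prod_k(A_k\cup\{0\})$, which has size $O(p\log m)$. Given $b_{i^\star}$, we compute the primary tallies and the finalist tuple $c$ in $O(np+mp)$ time. We then query each $g_j(c)$ in $O(1)$ and let $i^\star$ choose greedily among $\{c_1,\dots,c_p,0\}$, trying at most $p+1$ options with a GE tally each. Finally we check $\ttutil_{i^\star}\ge U$. There is no need to guess $g_{i^\star}$ as a whole, since only its value at the realised $c$ matters.

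For hardness I would reduce from 3-SAT with variables $x_1,\dots,x_p$ and clauses $C_1,\dots,C_q$. Each variable becomes a party $k$ with $A_k=\{t_k,f_k\}$. We add one extra ``target'' party $p+1$ with a single politician $w$ (the winner $i^\star$ wants) and a ``default'' party $p+2$ with a single politician $d$. Costs are zero, $u_{i^\star}(w)=1$, all other utilities of $i^\star$ are $0$, and $U=1$.

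The primaries of the variable parties are rigged to be exact ties between one fixed voter voting $t_k$ and one voting $f_k$, so $i^\star$'s single ballot $b_{i^\star,k}$ decides the nominee. We must handle the case where $i^\star$ abstains: the tie-breaking list then puts, say, $t_k$ first, so abstaining is equivalent to choosing $t_k$. Thus every finalist profile corresponds to a truth assignment $\alpha$, and every assignment is reachable.

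The GE is built from listed-preference voters, which is succinct and queryable in $O(1)$. For each clause $C_j$ we add one \emph{clause voter} whose list consists of the nominees that satisfy $C_j$ (for example $t_1,f_3,t_5$ for clause $(x_1\lor\neg x_3\lor x_5)$), then $w$, then the rest. Hence this voter votes $w$ if and only if $C_j$ is unsatisfied, and otherwise votes for a variable-party nominee. We also add $q-1$ voters who always vote $d$, and $i^\star$ votes $w$.

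The intended count is that $w$ gets $1+\#\{\text{unsatisfied clauses}\}$ votes, which is the wrong direction. So the gadget has to be flipped: we want $w$ to win exactly when all clauses are satisfied. I would therefore have each clause voter vote for a private dummy politician $z_j$ when $C_j$ is satisfied and for $d$ otherwise. Each $z_j$ is the sole politician of its own one-candidate party, so it is always a finalist. Then:
\begin{itemize}
\item $d$ receives $(\text{\#unsatisfied}) + D$ votes, where $D$ is a block of always-$d$ voters;
\item $w$ receives a fixed $W$ votes;
\item each $z_j$ receives at most $1$ vote;
\item the variable nominees receive only the two primary-rigging voters' GE votes, which we direct to abstain.
\end{itemize}
Choosing $W=D+1$, with tie-breaking ranking $d$ before $w$, gives that $w$ wins if and only if no clause is unsatisfied. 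This also needs $W>1$ so that $w$ beats every $z_j$.

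The main obstacle is getting this bookkeeping right under fixed tie-breaking. We must ensure that abstention by $i^\star$ in a primary does not create extra options, that the primary-rigging voters cannot influence the GE, and that a single satisfied-versus-unsatisfied difference flips the plurality winner exactly. Once the counts are set, correctness follows: a satisfying assignment yields ballots achieving utility $1$. Conversely, any ballot achieving utility $1$ induces an assignment with no unsatisfied clause. The construction has polynomial size, which gives NP-hardness even with zero participation costs.
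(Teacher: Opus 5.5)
Your reduction is correct and is essentially the paper's own: one two-candidate party per variable with $i^\star$ pivotal, two singleton anchor parties ($w$ and $d$ play the roles of $x_\top$ and $x_\bot$), clause voters who fall back to the losing anchor exactly when their clause is falsified, and a one-vote margin with tie-breaking favouring $d$ (the paper uses blocs of size $q+1$ and lets satisfied clause voters back a literal nominee instead of your private $z_j$). Two small bookkeeping points: first, $W$ must include $i^\star$'s own vote (i.e.\ $D$ always-$w$ voters plus $i^\star$), since otherwise a single falsified clause would still let $w$ win; second, the rule ``$z_j$ if $C_j$ is satisfied, else $d$'' is an $O(1)$-queryable succinct rule but not a listed-preference one, which the theorem allows (alternatively use the list $[\ell_{j1},\ell_{j2},\ell_{j3},d]$ with $D\ge q$, as the paper does).
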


\begin{proof}
\emph{Membership.}  
A certificate consists of $i^\star$’s strategy.  
Verification simulates the primaries, evaluates the finalists, queries each $g_j(c)$, tallies GE votes, and computes $\ttutil_{i^\star}$—all in polynomial time under the RAM model.

%\smallskip
\emph{Hardness (from 3-SAT).}  
Given a Boolean formula $\Phi(x_1,\dots,x_p)$, we construct an election instance where the primary stage encodes variable assignments and the GE tallies evaluate clauses.

\paragraph{Politicians (anchors as a fixed party).}
For each variable $x_k$ create a party with politicians $\{x_k,\neg x_k\}$.  
Add two \emph{dummy} parties $A_{\top}=\{x_\top\}$ and $A_{\bot}=\{x_\bot\}$ that skip meaningful primaries and always advance their unique nominees to the GE (they contribute only a constant factor to $|C|$ and do not affect the analysis).
Thus each finalist profile has the form $c=(x_\top,x_\bot,c_1,\dots,c_p)$.

\paragraph{Primaries.}
Only $i^\star$ votes in the variable parties’ primaries.  
Deterministic tie-breaking makes her ballot pivotal, so she freely selects nominees $\{c_1,\dots,c_p\}$ encoding an assignment $\sigma$.

\paragraph{GE voters.}
Two fixed blocs of size $Q=q+1$ vote for $x_\top$ and $x_\bot$, respectively.  
For each clause $C_t=(\ell_{t1}\vee\ell_{t2}\vee\ell_{t3})$, add one clause voter with list-form GE rule
\[
g_t(c;L_t),\qquad L_t=[\ell_{t1},\ell_{t2},\ell_{t3},x_\bot,0],
\]
which selects the first finalist in $L_t$ (succinct, $O(1)$ query).  
The distinguished voter $i^\star$ votes for $x_\top$.

\paragraph{Correctness and size.}
If $\Phi$ is satisfiable, $i^\star$ picks nominees according to a satisfying assignment, every clause voter supports some literal finalist, and $x_\top$ beats $x_\bot$ by one vote ($q{+}2$ vs.\ $q{+}1$).  
If $\Phi$ is unsatisfiable, some clause is falsified and its voter supports $x_\bot$, which then ties or beats $x_\top$; fixed tie-breaking favours $x_\bot$.  
The construction uses $O(p)$ politicians and $O(q)$ voters, hence is polynomial.

%\smallskip
Thus \textsc{BR-$\ge U$} is NP-hard; combined with membership, it is NP-complete.
\end{proof}

With $Q=q+1$ fixed supporters each for $x_\top$ and $x_\bot$, and only $q$ clause voters in total, any non-anchor finalist can obtain at most $q$ votes—strictly fewer than $q+2$.  
Hence the final winner is always either $x_\top$ or $x_\bot$.

The reduction can be further simplified by removing the fixed blocs.  
Define each clause voter $v_t$ so that if at least one literal $\ell_{ti}$ of clause $C_t$ appears among the finalists, $v_t$ abstains; otherwise $v_t$ votes for $x_\bot$.  
This version still uses $O(1)$-query rules and needs only one fixed supporter for $x_\top$, but it is less behaviourally natural (a kind of “protest abstention”), so we retain the bloc-based construction.

%\medskip
Thus, in our formulation, best-response computation remains tractable only when numbers of primaries or explicit GE rules are fixed.

\subsection{Pure Equilibrium: Verification, Existence, and Running Time}\label{sec:pure-eq}

As in $\S$~\ref{BR_Analysis}, we assume GE behaviour is given succinctly, so each voter’s strategy is $s_i=(b_{i,1},\dots,b_{i,p},g_i)$ with $b_{i,k}\in A_k\cup\{0\}$ and $g_i:C\to A\cup\{0\}$ queryable in $O(1)$ time.

We study two problems: \textsc{NE-VERIFY}—checking whether a given profile is a pure Nash equilibrium — and \textsc{NE-EXIST} — deciding whether one exists.

\begin{definition}[Equilibrium Verification (\textsc{NE-VERIFY})]
Under FPTP, given parties $\Pi=\{1,\dots,p\}$ with politician sets $A_k$ ($|A_k|\!\ge\!1$), a fixed tie-breaking list $\mathcal{L}$, utilities $[u]$, succinct GE rules $(g_j)_j$, and a strategy profile $\mathsf{s}=(s_i)_{i\in V}$,  
decide whether for all voters $i$ and all alternative strategies $s'_i$,
\[
\ttutil_i(\mathsf{s}) \;\ge\; \ttutil_i(s'_i,\mathsf{s}_{-i}).
\]
\end{definition}

\paragraph{Trivial equilibria.}
Before analysing equilibrium existence, we exclude outcome-constant profiles where everyone unanimously supports the same politician across stages; the winner is fixed and no unilateral deviation can change it.  
We therefore focus on a small subset of strategic voters:
\[
V = V_{\mathrm{env}} \cup V_{\mathrm{strat}},
\]
where $V_{\mathrm{env}}$ are fixed environment voters and $V_{\mathrm{strat}}$ are those whose decisions affect the GE outcome.

\begin{definition}[Equilibrium Existence (\textsc{NE-EXIST})]
Given the same model, decide whether there exists a profile $\mathsf{s}^*=(s_i^*)_{i\in V}$ such that,  
for every $i\in V_{\mathrm{strat}}$ and every unilateral deviation $s'_i$,
\[
\ttutil_i(\mathsf{s}^*) \;\ge\; \ttutil_i(s'_i,\mathsf{s}^*_{-i}).
\]
\end{definition}

\subsubsection{Verification}

The link between \textsc{NE-VERIFY} and \textsc{BR-$\ge U$} is direct:  
\textsc{BR-$\ge U$} asks whether there exists a profitable deviation, whereas \textsc{NE-VERIFY} asks whether no such deviation exists for any voter.  
A certificate for “not NE’’ is a pair $(i,s'_i)$; verifying it recomputes the finalists under $(s'_i,\mathsf{s}_{-i})$, evaluates $i$’s new GE action $y=g'_i(c)$ (a constant choice suffices), queries all $g_j(c)$ for $j\neq i$, tallies, and compares $\ttutil_i$.  
All these operations are polynomial, so “not NE’’ lies in NP and hence \textsc{NE-VERIFY} is in coNP.

\begin{theorem}[coNP-completeness of NE-VERIFY]\label{thm:NE-verif}
Given a strategy profile $\mathsf{s}$, deciding whether $\mathsf{s}$ is a pure Nash equilibrium is \textbf{coNP}-complete under FPTP with fixed tie-breaking, even with zero participation costs and succinct GE evaluation.
\end{theorem}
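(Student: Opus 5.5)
Membership in coNP was established in the discussion just before the statement: a certificate for ``not NE'' is a pair $(i,s'_i)$, and it can be checked in polynomial time by simulating the primaries, querying the succinct rules $g_j$ at the resulting finalist profile, and comparing utilities. It therefore remains to prove coNP-hardness. We reduce from 3-UNSAT, the complement of 3-SAT. Equivalently, we show that ``$\mathsf{s}$ is not an NE'' is NP-hard by reusing the gadget of Theorem~\ref{thm:BR-NPC}.

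\emph{Construction.} Start from a 3-CNF $\Phi(x_1,\dots,x_p)$ with $q$ clauses and build the election of Theorem~\ref{thm:BR-NPC}. There is one party $\{x_k,\neg x_k\}$ per variable, plus the dummy anchor parties $A_\top=\{x_\top\}$ and $A_\bot=\{x_\bot\}$. Two fixed blocs of size $q+1$ vote for $x_\top$ and $x_\bot$, and each clause voter uses the listed rule $L_t=[\ell_{t1},\ell_{t2},\ell_{t3},x_\bot,0]$. The distinguished voter $i^\star$ has $u_{i^\star}(x_\top)=1$ and utility $0$ for every other politician. We now fix a concrete profile $\mathsf{s}$. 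Let $i^\star$ abstain in every variable primary. Tie-breaking then selects, say, the positive literal in each variable party, which encodes the all-true assignment. In the GE, $i^\star$ votes for $x_\top$. To avoid the case where this default assignment already satisfies $\Phi$, first add a fresh variable $y$ and replace every clause $C_t$ by $C_t\vee\neg y$. This can be split back into 3-CNF with auxiliary variables, or we can simply allow longer lists $L_t$, which are still $O(1)$-queryable for bounded clause width. Let the default tie-break pick $y$. Under the default nominees every clause containing $\neg y$ is then falsified, so $x_\bot$ wins. The new formula is satisfiable exactly when $\Phi$ is, by setting $y$ false.

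\emph{Correctness.} All bloc and clause voters have constant utility (utility $0$ everywhere), so they have no profitable deviation. This holds even though participation costs are allowed to be zero, because they cannot strictly improve. Only $i^\star$ can deviate profitably. By the correctness argument of Theorem~\ref{thm:BR-NPC}, $i^\star$ can make $x_\top$ win, obtaining utility $1>0$, if and only if some choice of primary ballots yields a satisfying assignment. Hence $\mathsf{s}$ is an NE if and only if $\Phi$ is unsatisfiable. Costs are zero, so $i^\star$ loses nothing by abstaining in primaries, and her current utility is exactly $0$.

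\emph{Main obstacle.} The delicate step is making sure the given profile is not accidentally a non-equilibrium, or accidentally an equilibrium, for reasons unrelated to satisfiability. Two points need care. First, the default nominees chosen under $i^\star$'s abstention must produce an $x_\bot$ win; the $y$-padding handles this. Second, the anchor counts must keep every literal finalist below the $x_\top$ and $x_\bot$ totals, which the $q+1$ bloc sizes guarantee, as noted after Theorem~\ref{thm:BR-NPC}. I would also double-check that the tie-breaking list puts $x_\bot$ before $x_\top$, so that the $q+1$ versus $q+1$ situation, with $i^\star$ plus an unsatisfied clause voter, resolves against $i^\star$.
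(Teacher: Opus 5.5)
Your approach is the paper's: reuse the gadget of Theorem~\ref{thm:BR-NPC}, fix a default profile, and observe that only $i^\star$ can gain. You also rightly flag a point the paper's one-line proof skips: if the default assignment already satisfies $\Phi$, then $x_\top$ already wins, and the profile is an NE although $\Phi$ is satisfiable. However, \textbf{your repair breaks the reduction}. Replacing each $C_t$ by $C_t\vee\neg y$ gives $\bigwedge_t(C_t\vee\neg y)\equiv(y\rightarrow\Phi)$, which is satisfied by setting $y$ false \emph{whatever} $\Phi$ is. Concretely, $i^\star$ can always deviate by voting $\neg y$ in the $y$-primary. Every clause voter then finds a literal finalist, $x_\top$ beats $x_\bot$, and her utility rises from $0$ to $1$. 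So every unsatisfiable $\Phi$ is mapped to a non-equilibrium. Your claim that the default nominees falsify every padded clause is also false. With $y$ true, $C_t\vee\neg y$ reduces to $C_t$, which the all-true assignment satisfies as soon as $C_t$ contains a positive literal. In particular, if the all-true assignment satisfies $\Phi$, your profile is again an NE with $\Phi$ satisfiable, which is exactly the case the padding was meant to exclude.

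The padding should go the other way: use $\Phi\wedge(\neg y)$, i.e.\ add one clause voter with list $[\neg y,x_\bot,0]$, with default tie-break $y$. The default nominees then falsify this clause, so $x_\bot$ wins and $i^\star$ sits at utility $0$. Meanwhile $\Phi\wedge\neg y$ is satisfiable iff $\Phi$ is. Two alternatives also work: set each variable party's tie-break so the default nominees falsify all literals of one fixed non-tautological clause, or test the default assignment in polynomial time and output a fixed non-equilibrium instance when it satisfies $\Phi$. Size the blocs by the final clause count $q'$. Note also that the tie which $x_\bot\succ x_\top$ must resolve is $q'+2$ versus $q'+2$, not $q+1$ versus $q+1$. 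With this change the rest of your argument goes through: constant-utility voters with zero costs cannot strictly gain, and $i^\star$ gains iff some ballot vector encodes a satisfying assignment.
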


\begin{proof}
\emph{Membership.}  
A witness for “not NE’’ is $(i,s'_i)$; verifying it takes polynomial time, so \textsc{NE-VERIFY}~$\in$~coNP.

%\smallskip
\emph{Hardness.}  
Reduce from \textsc{UNSAT} using the same construction as in Theorem~\ref{thm:BR-NPC}.  
Fix $\mathsf{s}$ to select all “false’’ nominees.  
If $\Phi$ is satisfiable, a pivotal voter can deviate to a satisfying assignment and improve—so $\mathsf{s}$ is not an NE.  
If $\Phi$ is unsatisfiable, no unilateral deviation can yield a satisfying assignment, so no voter can improve and $\mathsf{s}$ is an NE.  
Thus $\Phi\in\textsc{UNSAT} \iff \mathsf{s}\in\textsc{NE-VERIFY}$, giving coNP-hardness.
\end{proof}

\subsubsection{Existence}

\textsc{NE-EXIST} has the quantifier form
\[
\exists \mathsf{s}\;\forall i\;\neg(\exists s'_i\text{ profitable}),
\]
where the inner predicate (“is there a profitable deviation?”) is in NP;
hence \textsc{NE-EXIST}~$\in\Sigma_2^{\mathbf P}$.
A standard gadget-based reduction from $2$-\textsc{QBF} (i.e., deciding $\exists X\,\forall Y\,\varphi(X,Y)$)—using closed primaries so that $\vo$ controls $X$ and $\ve$ controls $Y$, and the same clause/rescue voter gadgets with succinct GE functions—yields $\Sigma_2^{\mathbf P}$-hardness.
Before the formal proof, we show why pure equilibria may fail to exist.

\begin{example}\label{EXM:NO-NE-Kappa}
\textbf{No pure NE with closed primaries.}
Consider FPTP with two parties, 
\(A_{\mathrm{odd}}=\{a_1,a_3\}\) and 
\(A_{\mathrm{even}}=\{a_2,a_4\}\).
Four environment voters \(v_1,\dots,v_4\) have cyclic GE lists
$(a_1,a_2,a_3,a_4)$, $(a_2,a_3,a_4,a_1)$, $(a_3,a_4,a_1,a_2)$, $(a_4,a_1,a_2,a_3)$,
and two primary voters:
\[
\vo:\ a_1 \sim a_3 \succ a_2 \sim a_4,\qquad
\ve:\ a_2 \sim a_4 \succ a_1 \sim a_3.
\]
Say no cross-party participation
(\(\kappa^{\mathrm{odd}}_{\mathrm{even}}=\kappa^{\mathrm{even}}_{\mathrm{odd}}=\infty\)),
so each primary has a single strategic voter.
The induced $2\times2$ game of nominees is
\[
\begin{array}{c|cc}
 & a_2 & a_4\\
\hline
a_1 & \text{odd wins} & \text{even wins}\\
a_3 & \text{even wins} & \text{odd wins}
\end{array}
\]
which is the \emph{matching-pennies} payoff structure:
no pure Nash equilibrium exists (the unique mixed equilibrium is uniform).
\end{example}

Even without participation costs, pure equilibria may fail to exist:  
cyclic tie-breaking across primaries ensures that some voter can always deviate profitably (see Appendix~\ref{appx:no-kappa} for the full construction).

\begin{proposition}\label{prop:noNE-nokappa}
Under FPTP with fixed tie-breaking, primary elections may admit no pure-strategy Nash equilibrium.
\end{proposition}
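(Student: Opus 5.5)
We prove the proposition by exhibiting one explicit instance and checking that no pure profile is stable. The plan is to reuse the matching-pennies structure of Example~\ref{EXM:NO-NE-Kappa}. The difference is that cross-party participation is now blocked by the tie-breaking rule rather than by infinite participation costs, so the construction also works with zero costs.

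\emph{Setup.} Take the two parties $\Aodd=\{a_1,a_3\}$ and $\Aeven=\{a_2,a_4\}$. Keep the four environment voters with cyclic GE lists $(a_1,a_2,a_3,a_4)$, $(a_2,a_3,a_4,a_1)$, $(a_3,a_4,a_1,a_2)$, $(a_4,a_1,a_2,a_3)$. Under FPTP, every finalist pair $(c_{\mathrm{odd}},c_{\mathrm{even}})$ is then a $2{:}2$ tie among the environment voters, so the outcome is decided by the strategic voters' GE ballots together with $\mathcal{L}$.

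\emph{Step 1: the tie-breaking list.} Choose a tie-breaking list (per primary and in the GE) so that the parity rule holds. We want $\{a_1,a_2\}$ and $\{a_3,a_4\}$ resolved toward the odd candidate, and $\{a_1,a_4\}$ and $\{a_3,a_2\}$ toward the even one. A single linear order cannot realise this cycle, so we add a fixed environment voter for $a_1$ and $a_3$ only in the matching cases. Concretely, we use listed-preference GE rules such as ``vote $a_1$ if $a_2$ is finalist'' and ``vote $a_4$ if $a_1$ is finalist'', which are succinct $O(1)$ rules permitted by the model.

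\emph{Step 2: the strategic voters.} Let $\vo$ be indifferent within $\Aodd$ and prefer it to $\Aeven$, and let $\ve$ be the mirror image. Both voters vote sincerely in the GE. Each of them is then indifferent or neutral in the GE given the finalists, so a fixed GE choice does not change the tie structure; we check this case by case.

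\emph{Step 3: no unilateral deviation breaks the cycle.} This is the main obstacle. In the open/multiple-participation regime, $\vo$ may enter the even primary and vice versa, so we must ensure that crossover voting gives no escape. We handle it by adding two environment voters to each primary, one for each candidate. A single extra ballot then decides that primary, and both strategic voters become potentially pivotal in both primaries. We then fix the tie-break in each primary so that any profile $(b_{\mathrm{odd}},b_{\mathrm{even}})$ reduces to a $2\times2$ nominee game of matching-pennies type.

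\emph{Step 4: conclusion.} Enumerate all $16$ pure combinations of primary ballots. In each combination, the voter whose party loses can flip one ballot, obtain the winning parity, and strictly gain. Therefore no pure Nash equilibrium exists, even with all $\kappa=0$.

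If crossover pivotality defeats this, the fallback is the intended cyclic tie-breaking across primaries: the tie-breaking list in each primary is chosen adversarially relative to the GE list, so that whichever primary a voter enters, the induced finalist pair gives the opponent the win.
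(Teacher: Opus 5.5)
Your plan fails at Step~3, and the failure is structural. Suppose both $\vo$ and $\ve$ may vote in both primaries, and each primary has one environment voter per candidate. Then each primary contains a candidate that \emph{either} strategic voter can force alone. With tallies $1{:}1$ and the fixed tie-break favouring, say, $f_k$, one ballot for $f_k$ gives it $2$ votes against at most $2$ for its rival, so $f_k$ wins whatever the other voter does. The same holds for any environment tallies: the environment leader under the fixed tie-break is forceable by either strategic voter.

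Hence either voter can unilaterally impose the finalist pair $(f_{\mathrm{odd}},f_{\mathrm{even}})$. Let $P$ be the party that wins at that pair. $P$'s partisan can force the pair and vote sincerely in the GE, and this guarantees a win against every opponent strategy. With zero costs, that strategy together with any strategy of the opponent is a pure NE. So no choice of tie-breaking in Step~3 yields matching pennies. Matching pennies needs each primary to be controlled by a \emph{different} voter, and crossover pivotality destroys exactly that. Two smaller points: the $16$-profile enumeration in Step~4 undercounts (with crossover and abstention each voter has $9$ primary ballots), and the closing fallback paragraph is not an argument.

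Your Setup also misreads the gadget. With finalists $(a_1,a_2)$ the four cyclic voters vote $a_1,a_2,a_1,a_1$, a $3{:}1$ split rather than $2{:}2$. Likewise $(a_3,a_4)$ goes $3{:}1$ to $a_3$, while $(a_1,a_4)$ and $(a_3,a_2)$ go $3{:}1$ to the even nominee. The environment voters therefore realise the table of Example~\ref{EXM:NO-NE-Kappa} on their own, and no unilateral strategic GE ballot overturns a $3{:}1$ margin. Step~1's tie-breaking patch is thus unnecessary, and its rules are not coherent as written.

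The proposition as stated does not require zero costs. The paper proves it by pointing to Example~\ref{EXM:NO-NE-Kappa}, where cross-party participation is blocked (closed primaries, or infinite crossover cost), so each partisan controls only her own primary. That example, checked with the $3{:}1$ computation above, already suffices: in every profile the losing partisan flips her own primary and wins. The paper cites its separate appendix example for the zero-cost case. If you want that stronger version, you need a genuinely different mechanism, not crossover pivotality between two strategic voters.
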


\begin{proof}
Examples~\ref{EXM:NO-NE-Kappa} and~\ref{EXM:NO-NE} exhibit deviation cycles—with and without participation costs—so at least one voter can always improve unilaterally; hence no pure NE exists.
\end{proof}

\begin{theorem}[$\Sigma_2^{\mathbf P}$-completeness of NE existence]\label{thm:NE-exist}
Under FPTP, deciding whether a pure Nash equilibrium exists is $\Sigma_2^{\mathbf P}$-complete with fixed tie-breaking, even when GE behaviour is finalist-based (list actions).
\end{theorem}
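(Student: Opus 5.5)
Membership is the quantifier argument already given before Example~\ref{EXM:NO-NE-Kappa}. NE-EXIST has the form $\exists \mathsf{s}\,\forall (i,s'_i)\,[\ttutil_i(\mathsf{s})\ge \ttutil_i(s'_i,\mathsf{s}_{-i})]$. A profile $\mathsf{s}$ is polynomially sized: primary ballots plus succinct list-form GE rules. We may restrict each $g_i$ to list form without loss, since a best response only needs the greedy GE choice once the finalists are fixed (Section~\ref{BR_Analysis}). The inner predicate is evaluated in polynomial time exactly as in the membership part of Theorem~\ref{thm:BR-NPC}, so the problem lies in $\Sigma_2^{\mathbf P}$.

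For hardness I reduce from $2$-QBF, $\exists X\,\forall Y\,\varphi(X,Y)$ with $\varphi$ in 3-DNF (equivalently, $\forall Y\,\neg\varphi$ is a CNF whose satisfiability check is the deviation question). I reuse the gadgets of Theorem~\ref{thm:BR-NPC}. Each variable gets its own party $\{x_k,\neg x_k\}$ or $\{y_k,\neg y_k\}$, and closed primaries with infinite cross-party costs make $\vo$ the only voter in the $X$-parties and $\ve$ the only voter in the $Y$-parties. Tie-breaking makes each of them pivotal, so a profile fixes an assignment $(\sigma_X,\sigma_Y)$. The anchors $x_\top,x_\bot$ come with fixed blocs of size $Q$. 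Each term $T_t$ of $\varphi$ gets a clause voter whose list form votes for $x_\top$ exactly when all literals of $T_t$ are finalists. To keep the rules $O(1)$ and list-based, I split each term into its negated literals: the list is $[\bar\ell_{t1},\bar\ell_{t2},\bar\ell_{t3},x_\top,0]$, and the bloc sizes are tuned so that one ``rescue'' vote for $x_\top$ is decisive. Utilities are $\vo$ preferring $x_\top$ and $\ve$ preferring $x_\bot$. The intended correspondence is this: $\ve$ can profitably deviate iff $\varphi(\sigma_X,\cdot)$ is falsifiable, and $\vo$ is content whenever $x_\top$ wins.

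Both directions then follow. If some $\sigma_X$ makes $\varphi$ valid, the profile with that $\sigma_X$ and any $\sigma_Y$ has $x_\top$ winning. Then $\vo$ has no incentive to deviate, and no choice of $Y$ by $\ve$ changes the winner, so the profile is an NE. If instead every $\sigma_X$ admits a falsifying $\sigma_Y$, the profile must still fail to be an NE even when $\ve$ currently cannot improve. This is where I would attach the matching-pennies gadget of Example~\ref{EXM:NO-NE-Kappa}: an extra pair of parties $\{a_1,a_3\},\{a_2,a_4\}$ with cyclic environment voters. Its output is wired so that it matters only when $x_\bot$ is winning. Concretely, $x_\bot$'s supporters defect to the cyclic subgame's winner only if a falsified-term flag is active. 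A genuinely falsified outcome therefore triggers an improvement cycle, while outcomes where $x_\top$ wins are stable. The remaining strategic voters (clause and environment voters) are placed in $V_{\mathrm{env}}$, so only $\vo$, $\ve$ and the gadget players need to be checked.

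The main obstacle is this last step. The NE must exist exactly in the ``yes'' case, and the gadget must not create spurious equilibria, for example one in which $\ve$ has already chosen a falsifying $Y$ and the cycle is somehow neutralised. The gadget also must not destroy the intended equilibrium. I would first try to make the matching-pennies players' utilities depend on the GE winner only through $x_\bot$ versus $x_\top$, so that they are indifferent (hence stable) whenever $x_\top$ wins. I would then verify the vote arithmetic, choosing $Q$ relative to the number of terms so that the anchors always dominate, as in the remark after Theorem~\ref{thm:BR-NPC}, and check that fixed tie-breaking resolves every boundary case in the intended direction.
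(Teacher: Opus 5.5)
\textbf{There is a genuine gap.} It is the step you flag yourself as the main obstacle, and without it the reduction is incorrect. In your base game, $\vo$ fixes $X$, $\ve$ fixes $Y$, and $x_\top$ wins iff $\varphi(X,Y)$. That is a zero-sum win/lose game, so it has a pure NE iff $\exists X\,\forall Y\,\varphi$ \emph{or} $\exists Y\,\forall X\,\neg\varphi$. The proof therefore depends on a gadget that destroys the second kind of equilibrium, and the proposal never constructs one. The wiring you sketch also fails as stated:
\begin{itemize}
\item A list-form GE rule sees only the finalist tuple, so no single $x_\bot$-supporter can test ``every term is falsified'', which is a conjunction over all terms. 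Per-term flags only make the number of defectors equal to the number of falsified terms, which is not the condition ``$x_\bot$ is winning''.
\item If the gadget players' utilities depend only on $x_\top$ versus $x_\bot$, they have no opposed preferences over $a_1,\dots,a_4$, so nothing cycles.
\item Once the gadget's winner replaces $x_\bot$, $\ve$'s deviation to a falsifying $Y$ is profitable only if you also specify that $\ve$ prefers the gadget outcomes to $x_\top$.
\end{itemize}

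\textbf{How the paper handles this step.} It uses no separate gadget: the formula players are themselves the matching-pennies players. $\vo$ also picks $a_o\in\{a_1,a_3\}$, $\ve$ picks $a_e\in\{a_2,a_4\}$, and formula failure activates the cyclic $3{:}1$ split of Example~\ref{EXM:NO-NE-Kappa}. Flipping one's top-level nominee leaves the assignment, and hence the formula's value, unchanged. So the cycle cannot be neutralised, which rules out exactly the spurious equilibrium you worry about.

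\textbf{Your DNF choice is right.} With a CNF matrix, as the paper's write-up states it, $\exists X\,\forall Y\,\varphi$ is already in NP: for fixed $X$, each clause must be satisfied by $X$ or contain a complementary pair of $Y$-literals.

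\textbf{How to close the gap.} Combine your DNF encoding with the paper's gadget:
\begin{itemize}
\item Drop $x_\top$ and $x_\bot$, and give both top-level parties equal fixed blocs larger than all literal-directed votes.
\item Keep the four cyclic voters.
\item For each term, add a constant number of copies of a voter with list $[\bar\ell_{t1},\bar\ell_{t2},\bar\ell_{t3},a_1,a_3,0]$. Use enough copies to outweigh the $\pm 2$ cyclic margin plus the two strategic GE ballots.
\end{itemize}
Then the odd nominee wins whenever some term is satisfied, and matching pennies decides otherwise. In a yes-instance, $\vo$ playing a witness $X$ and voting for her nominee is stable. In a no-instance, $\ve$ breaks any odd win by switching to a falsifying $Y$ together with a mismatching $a_e$. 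In any profile with all terms false, the current loser wins by choosing the top-level nominee that makes the $3{:}1$ split favour her, and voting for it.

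\textbf{A smaller point on membership.} The greedy-GE observation justifies restricting the deviation $s'_i$. It does not justify restricting the off-path GE rules of the guessed profile, because those affect other voters' deviation payoffs. Restricting profiles to list-form rules is an assumption of the model, as in the statement, not a consequence of best-response greediness.
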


\begin{proof}
\emph{Membership.}
The problem has the form
$\exists \mathsf{s}\ \forall i\ \neg(\exists s'_i$ profitable).
Given $\mathsf{s}$ and $(i,s'_i)$ we can recompute finalists, evaluate list-based GE actions in $O(1)$ per voter, tally, and compare utilities in polynomial time, hence \textsc{NE-EXIST}$\in\Sigma_2^{\mathbf P}$.

%\smallskip
\emph{Hardness (from $2-$\textsc{QBF}).}
Let $\Phi(X,Y)=\exists X\,\forall Y:\varphi(X,Y)$ be a quantified 3-CNF.
We build a primary-election game $G_\Phi$ (FPTP, fixed tie-breaking) that admits a pure NE iff $\Phi$ is true.

\paragraph{Parties and players.}
Each variable $z\in X\cup Y$ corresponds to a party with nominees $\{z,\neg z\}$.
Add two top-level parties $\Aodd=\{a_1,a_3\}$ and $\Aeven=\{a_2,a_4\}$ with tie-breaking
$a_1\succ a_2\succ a_3\succ a_4$.
Two strategic voters act:
$\vo$ controls all $X$-parties and chooses $a_o\in\{a_1,a_3\}$,
$\ve$ controls all $Y$-parties and $a_e\in\{a_2,a_4\}$.
Both care only about whether their top-level party wins the GE.

\paragraph{Background and clauses.}
Give $\Aodd$ a baseline margin $+1$ (e.g., $r{+}1$ fixed odd supporters vs.\ $r$ even).
For each clause $C_t=(\ell_{t1}\vee\ell_{t2}\vee\ell_{t3})$ add four clause voters
$v_{t1},\dots,v_{t4}$ with lists
\[
L_{ti}=(\ell_{t1},\ell_{t2},\ell_{t3},a_i,a_{i+1},a_{i+2},a_{i+3})\quad(\text{mod }4).
\]
Each votes for the first finalist in $L_{ti}$.
If $C_t$ is satisfied, all four vote for literals; if falsified, they fallback cyclically to $\{a_1,a_2,a_3,a_4\}$, producing a $3{:}1$ split that favours one top-level party depending on $(a_o,a_e)$
(as in Example~\ref{EXM:NO-NE-Kappa}).

The construction uses $O(|X|{+}|Y|{+}r)$ politicians and voters, all with succinct GE rules, so its size is polynomial.

%\medskip
\noindent(\,$\Rightarrow$\,)
If $\Phi$ is true, let $x^\star$ satisfy $\forall Y:\varphi(x^\star,Y)$.
In profile $S^\star=(x^\star,a_1;\,y,a_e)$ for any $(y,a_e)$,
every clause is satisfied, so all clause voters support literals and the baseline $+1$ margin lets $\Aodd$ win.
Neither $\ve$ (who always loses) nor $\vo$ (already winning) can improve, so $S^\star$ is a pure NE.

%\smallskip
\noindent(\,$\Leftarrow$\,)
If $\Phi$ is false, then for every $x$ there exists $y$ falsifying at least one clause.
Each falsified clause triggers a $3{:}1$ fallback favouring one party;
switching the top-level nominee ($a_o$ or $a_e$) reverses that advantage.
Hence, after any $(x,a_o;y,a_e)$ the loser can deviate profitably,
and deviations alternate indefinitely—no pure NE exists.

%\smallskip
Thus $G_\Phi$ has a pure NE iff $\Phi$ is true, establishing
$\Sigma_2^{\mathbf P}$-hardness.  
Combined with membership, \textsc{NE-EXIST} is $\Sigma_2^{\mathbf P}$-complete.
\end{proof}

\noindent\textbf{Takeaway.}  
Equilibrium analysis in primary elections spans the second level of the polynomial hierarchy.  
Verification of a proposed profile lies in \textbf{coNP}, as disproving stability only requires exhibiting a profitable deviation; deciding existence introduces an additional existential quantifier over profiles, yielding \textbf{$\Sigma_2^{\mathbf P}$}-completeness.  
While best responses are tractable in isolation, the search for mutually stable outcomes is theoretically intractable.

\section{Complexity Under Sequential Primaries: Commitment vs.\ Conditioning}\label{sec:sequential-BS}

Our $\Sigma_2^{\mathbf P}$ result already assumes that voters’ full strategies are evaluated \emph{ex post}:  
each voter commits in advance to a complete plan across all stages, and equilibrium verification checks whether any unilateral deviation is profitable.

The picture changes once primaries hold sequentially, and later-stage voters may \emph{condition} their actions on earlier outcomes, as in subgame-perfect play.  
In Theorem~\ref{thm:NE-exist}, we used $\exists$ for the designer-controlled variables ($\vo$) and $\forall$ for the adversarial ones ($\ve$) to obtain the $\Sigma^{\mathbf P}_2$-complete result.  
Extending this reasoning, each sequential stage introduces an additional quantifier block of the same type.  
Thus a process with $p$ stages behaves like a $\Sigma^{\mathbf P}_{O(p)}$ problem, and when $p$ is unbounded, the alternation depth is unbounded, suggesting PSPACE-completeness by analogy with quantified Boolean formulas (QBF).  
We formalize this correspondence below.

\paragraph{Sequential primaries (model recap).}
There are $p$ parties indexed by $k \in [p]$, each with a nonempty politician set $A_k$, and let 
$C \coloneqq A_1 \times \cdots \times A_p$ denote the set of nominee tuples.  
Primaries occur in a fixed order $1,2,\dots,p$.  
At stage $t \in [p]$, a designated subset of strategic voters $ V_{\mathrm{strat}} \subseteq V$ acts, while all others follow fixed behaviours.  
Within each stage, moves are simultaneous.  
After resolving the party-$k$ primary (under FPTP with fixed tie-breaking), its nominee $c_k \in A_k$ is fixed, and the process proceeds to stage $k{+}1$.  
After all stages, the general-election (GE) outcome is computed from $c = (c_1,\dots,c_p) \in C$ via a polynomial-time evaluable rule $\mathcal{W} : C \to A$ (e.g., FPTP with fixed tie-breaking).

\paragraph{Information and histories.}
Let $C_{<k} \coloneqq A_1 \times \cdots \times A_{k-1}$ denote the space of nominee prefixes.  
Before acting at stage $k$, all players observe the realized prefix $h_k \in C_{<k}$, i.e., the sequence of winners $(c_1,\dots,c_{k-1})$.%
\footnote{If only partial or noisy information is available (e.g., observing only $c_{k-1}$), replace $C_{<k}$ by the corresponding information set $\mathcal{H}_k$.}
A ballot for party $k$ is an element of $A_k \cup \{0\}$.  
For voter $i$, a (behavioural) strategy at party $k$ is a function
\[
f^{p}_{i,k} : C_{<k} \longrightarrow A_k \cup \{0\}.
\]
Hence a \emph{history-dependent strategy} for voter $i$ is the tuple
\[
s_i = \bigl(f^{p}_{i,1}, \dots, f^{p}_{i,p},\; g_i\bigr),
\]
where $g_i$ represents the GE behaviour (e.g., a map $g_i : C \to A \cup \{0\}$ or a ranked-list policy, depending on the GE model).  
A full strategy profile is $\mathrm{s} = (s_i)_{i \in V}$.

Given $ \mathrm{s}$, the induced nominee tuple $c( \mathrm{s} ) \in C$ is generated by sequentially applying FPTP with fixed tie-breaking at each stage using ballots $\{ f^{p}_{i,k}(h_k) \}_{i \in V_k}$, where $h_k$ is the realized prefix.  
The GE winner is $\mathcal{W}\bigl(c( \mathrm{s} )\bigr)$.

%\begin{figure}[ht]
    %\centering
    %\resizebox{.8\linewidth}{!}{\input{Figures/Illustration_Completed_tree}}
    %\caption{Game tree of two voters in a sequential primary. Nodes labelled %\emph{1’s} and \emph{2’s} represent party~1’s and party~2’s primaries, respectively. At each node, both voters simultaneously choose ballots $(a_i,a_j)$.}
    %\Description{ The figure shows a horizontal game tree with three stages. The leftmost node, labelled “1’s”, represents Party 1’s primary election. From it, several arrows branch rightward, each labelled with a joint ballot such as $(a_1,a_1)$ or $(a_2, a_2)$. These arrows connect to three vertically stacked nodes, all labelled “2’s”, showing Party 2’s primary after different stage-1 outcomes.Between the 2’s nodes is a vertical ellipsis indicating additional possibilities. From the middle “2’s” node, a dashed continuation marked “...” leads rightward to a node labelled “GE” (general election).    The GE node branches into two terminal circles, labelled $(c_1,c_1)$ and $(c_p,c_p)$, with a vertical ellipsis between them to indicate further outcomes.The diagram thus depicts how the sequential primary process unfolds: Party 1 acts first, Party 2 responds based on observed history, and the process concludes in a general election.    }
    %\label{fig:tree}
%\end{figure}

\begin{figure}[ht]
    \centering
%    \resizebox{0.8\linewidth}{!}{\input{Figures/Extensive_Form_Sequential}}
    \includegraphics[width=0.8\linewidth]{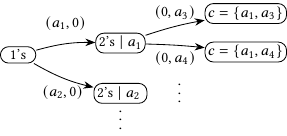}
    \caption{A partial game tree of two voters in a sequential primary. Nodes labelled ``1’s'' and ``2’s $\mid a_j$'' represent party~1’s primary and party~2’s primary conditional on candidate $a_j$ winning party~1’s primary, respectively. At each node, both voters simultaneously cast ballots $(a_i,a_{i'})$. Leaf nodes $c=\{c_1,c_2\}$ denote the general-election finalists.}
    
   \Description{
    The figure shows a partial horizontal game tree representing a two-voter sequential primary. The leftmost node, labelled \emph{1’s}, represents party~1’s primary election. Edges branching to the right correspond to joint ballot choices such as $(a_1,a_1)$ or $(a_2,a_2)$. These edges lead to nodes labelled \emph{2’s}\,$\mid a_j$, indicating party~2’s primary conditional on candidate $a_j$ winning party~1’s primary; a vertical ellipsis indicates additional possibilities. From each \emph{2’s}\,$\mid a_j$ node, the process proceeds to the GE, which branches into terminal nodes corresponding to finalist pairs.
    Overall, the diagram illustrates that party~1 acts first, party~2 responds after observing the stage~1 outcome, and the process concludes with a GE.}
    \label{fig:tree}
\end{figure}

The tree in Figure~\ref{fig:tree} illustrates the extensive form of a two-voter sequential primary.
Stage~1 (\emph{1’s}) and stage~2 (\emph{2’s}) correspond to the primaries of party~1 and party~2, respectively, where nodes labelled "$2\text{’s}\mid a_j$" indicate that candidate $a_j$ won party~1’s primary.
At each node, both voters simultaneously cast ballots $(a_i,a_{i'})$, and each branch represents a possible nominee outcome leading to the GE.

Note that the branching factor at each node of party~$k$’s primary is $(m_k+1)^n$, reflecting all possible combinations of $n$ voters’ simultaneous ballots over $m_k$ politicians (including abstention).
Each node can thus be indexed by an $n$-tuple $(a_{k1}, a_{k2}, \dots, a_{kn})$ representing the voters’ ballots. 
Although the number of nodes grows exponentially in $n$, the full tree can still be traversed using polynomial space.

\paragraph{Strategic scope and representation.}
We assume that only a small subset of voters are strategic, while all others are \emph{environmental voters} whose behaviour remains fixed as $(b_i, g_i)$ defined in $\S$~\ref{sec:framework}.  
If each function $f^{p}_{i,k}$ were represented explicitly, its table size would grow as $\Theta(\prod_{\ell<k} |A_\ell|)$, i.e., $O(m^p)$ per voter, making even the input exponentially large in the number of parties.  
To avoid this blow-up, we retain environmental voters in their simple form and restrict sequential reasoning to the designated strategic subset.

Instead of the best responses for a single voter,\footnote{
If all voters followed the simpler $(b_i, g_i)$ form, the game effectively collapses to the simultaneous case.  
Otherwise, one would need to impose additional succinctness restrictions on $f^p_{i,k}$, which would be ad hoc and fall outside our present scope.} we focus on \emph{dominant strategies} and \emph{equilibria among the strategic voters} to ensure that the input size remains polynomially bounded.

\paragraph{Equilibrium notion.}
We study pure subgame-perfect equilibria (SPNE) in the induced extensive-form game with observed prefixes; that is, strategy profiles $s$ that form a Nash equilibrium in every subgame rooted at any history $h_k \in C_{<k}$.

\begin{definition}[Sequential Dominant Strategy at Least $U$ (\textsc{SeqDS-$\ge U$})]
Given an instance as above, a designated player $i^\star$, and a threshold $U\in\mathbb{Q}$,
decide whether $i^\star$ has a strategy $s_{i^\star}$ that guarantees
$\ttutil_{i^\star}(s_{i^\star},s_{-i^\star}) \ge U$
for all admissible multi-stage strategies $s_{-i^\star}$ of her opponents.
\end{definition}

\begin{theorem}[\textsc{SeqDS-$\ge U$} is PSPACE-complete]\label{thm:seq-ds-pspace}
Under FPTP with fixed tie-breaking and polynomial-time GE evaluation, if the number of stages $p$ is part of the input (polynomially bounded horizon), then \textsc{SeqDS-$\ge U$} is PSPACE-complete.
For constant $p$, the problem lies in $\Sigma_{O(p)}^{\mathbf P}$.
\end{theorem}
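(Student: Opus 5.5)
Membership and hardness are handled separately. Hardness goes by reduction from TQBF, reusing the clause and anchor gadgets of Theorem~\ref{thm:BR-NPC}.

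\emph{Membership.} \textsc{SeqDS-$\ge U$} is decided by a depth-first evaluation of the game tree, alternating quantifiers stage by stage. At stage $k$ with realized prefix $h_k$, we existentially guess $i^\star$'s ballot in $A_k\cup\{0\}$. We then universally range over the joint ballots of the other strategic voters at that stage, which is a polynomial-length string. The environmental ballots $b_{j,k}$ are fixed, so the stage-$k$ nominee follows by FPTP with the fixed tie-breaking list, and we recurse on the extended prefix.

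At a leaf $c\in C$, $i^\star$ chooses her GE action greedily among the finalists and abstention, as in Section~\ref{BR_Analysis}. The opponents' GE actions are universally quantified, the environmental $g_j(c)$ are queried, and we test whether $\ttutil_{i^\star}\ge U$. A strategy $s_{i^\star}$ guaranteeing $U$ against every opponent strategy exists iff this alternating game value is true. Two observations justify this:
\begin{itemize}
\item opponents may condition on the history, so worst-case play against a history-dependent $f^p_{i^\star,k}$ is exactly the $\exists/\forall$ recursion;
\item simultaneity within a stage is harmless, because $i^\star$'s guarantee must hold against every opponent ballot, so quantifying $\exists$ before $\forall$ within a stage is correct.
\end{itemize}
The recursion depth is $p+1$ and each frame stores a polynomial amount of data, so it runs in polynomial space. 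With constant $p$ this is an alternating polynomial-time computation with $O(p)$ alternations, hence in $\Sigma^{\mathbf P}_{O(p)}$.

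\emph{Hardness.} Given $\Psi=\exists x_1\forall y_1\cdots\exists x_r\forall y_r\,\varphi$ with $\varphi$ in 3-CNF, create one party per variable with nominees $\{z,\neg z\}$, ordered as the quantifier prefix. Add the dummy anchor parties $\{x_\top\}$ and $\{x_\bot\}$. In each $x$-party primary, $i^\star$ is the sole strategic voter. In each $y$-party primary, a single adversary $j$ is the sole strategic voter. Every other voter abstains in these primaries, so the lone voter is pivotal under fixed tie-breaking.

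Each primary is restricted to its own strategic voter by taking cross-party participation costs to be prohibitive, as in Example~\ref{EXM:NO-NE-Kappa}. Alternatively, the environmental voters can be given ballots that cancel any single intruder. I prefer the cost version, with $U$ set so that any cross-participation by $i^\star$ falls below $U$ automatically, while $j$ is assumed to have no such ballots admissible.

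For the GE we reuse the construction of Theorem~\ref{thm:BR-NPC} verbatim:
\begin{itemize}
\item blocs of size $q+1$ for $x_\top$ and for $x_\bot$;
\item one clause voter per clause, with list $L_t=[\ell_{t1},\ell_{t2},\ell_{t3},x_\bot,0]$;
\item $i^\star$ votes for $x_\top$ and $j$ abstains (or anything else; the adversary's GE vote must not matter, so give $x_\top$ a margin of two by using $q+2$ for the $x_\top$ bloc and $q+1$ for $x_\bot$, keeping the argument symmetric).
\end{itemize}
Set $u_{i^\star}(x_\top)=1$, $u_{i^\star}(x_\bot)=0$ and $U=1$. Then $x_\top$ wins iff every clause is satisfied by the nominee tuple. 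A history-dependent strategy of $i^\star$ is precisely a family of Skolem functions for the $x_i$, and the universal quantification over $j$'s history-dependent strategies covers all $y$-assignments. Hence $i^\star$ guarantees $U$ iff $\Psi$ is true.

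\emph{Main obstacle.} The delicate step is calibrating the margins so that the adversary's unconstrained GE vote, and any participation in primaries outside its own, cannot flip the outcome. This interacts with the fact that the problem quantifies over \emph{all} admissible opponent strategies. I would first try to restrict admissible ballots through closed-primary registration rather than costs, then recheck the one-vote margin with $j$'s vote counted adversarially, enlarging the $x_\bot$/$x_\top$ bloc gap by one if needed.
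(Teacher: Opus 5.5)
Your proposal is correct and follows the paper's proof. Membership is the same depth-first alternating evaluation of the depth-$p$ game tree, and hardness is the same QBF reduction: one $\{z,\neg z\}$ party per variable, each stage controlled by $i^\star$ or by an adversary according to its quantifier, and the anchor-bloc/clause-voter GE gadget of Theorem~\ref{thm:BR-NPC}.

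Your extra vote for the $x_\top$ bloc is a worthwhile refinement: the paper's tally ignores that the adversary's own GE vote for $x_\bot$ would otherwise force a tie in the all-satisfied case. One caveat: confining the adversary to its own primaries has to come from the model's per-stage designation of acting voters, as the paper does. Costs cannot constrain a universally quantified opponent, and no fixed environment ballots can neutralize a single intruder under FPTP, so drop those alternatives.
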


\begin{proof}[Proof.]
\emph{Membership.}
The extensive form has depth $p$.
At a stage $k$, the joint action set has size $(m_k)^n$ (cf.\ Fig.~\ref{fig:tree} and the discussion:
each node can be indexed by the $n$-tuple of ballots $(a_{k1},\dots,a_{kn})$), so the branching factor
may be exponential in $n$, but depth-first evaluation needs only polynomial \emph{space}:
we store one path of length $p$ plus polynomial-time computable summaries at each node.
Utilities at leaves are obtained by computing the GE outcome and checking $\ttutil_{i^\star}\!\ge\!U$,
both in polynomial time by assumption. Hence the alternating (max–min) evaluation of the game tree
for the dominance predicate (``does $i^\star$ guarantee utility $\ge U$ against all opponents?'')
is decidable in PSPACE via standard DFS with backtracking.

\emph{Hardness (from QBF).}  
Let $\Phi = Q_1X_1\, Q_2X_2 \cdots Q_pX_p : \varphi(X)$ be a quantified 3-CNF formula.  
We construct a sequential-primaries game with $p$ stages encoding $\Phi$.

\textbf{Politicians.}  
For each variable $X_k$, create party~$k$ with nominees $\{x_k,\neg x_k\}$, and add two dummy parties $A_{\top}=\{x_\top\}$ and $A_{\bot}=\{x_\bot\}$.

\textbf{Control.}  
If $Q_k=\exists$, stage~$k$ is controlled by $i^\star$ (tie-breaking ensures her ballot decides);  
if $Q_k=\forall$, by an adversary $v_a$.  
Thus the nominee tuple $c=(c_1,\dots,c_p)$ encodes a full assignment to $X$.

\textbf{GE voters.}  
Two fixed blocs $B_\top,B_\bot$ (each of size $q{+}1$) always vote $x_\top$ and $x_\bot$, respectively.  
For each clause $C_t=(\ell_{t1}\vee\ell_{t2}\vee\ell_{t3})$, add a clause voter with policy  
\[
g_t(c)=\text{first of }[\ell_{t1},\ell_{t2},\ell_{t3},x_\bot,0]\text{ appearing among finalists.}
\]
If $C_t$ is falsified under $c$, this voter supports $x_\bot$; otherwise, a satisfied literal gains her vote.

\textbf{Utilities.}  
Let $U:=u_{i^\star}(x_\top)$, with $u_{i^\star}(x_\top)>$ $u_{i^\star}(x_\bot)>$ \\$u_{i^\star}(\text{others})$,  
so $\ttutil_{i^\star}\ge U$ iff $x_\top$ wins the GE.  

\textbf{Tally.}  
$x_\top$ has baseline $q{+}2$ votes ($B_\top$ plus $i^\star$);  
$x_\bot$ starts with $q{+}1$ ($B_\bot$) and gains one per falsified clause.  
With tie-breaking in favour of $x_\bot$, $x_\top$ wins iff all clauses are satisfied.

\textbf{Correctness.}  
At each stage:
- if $Q_k=\exists$, $i^\star$ chooses $c_k\in\{x_k,\neg x_k\}$ (OR-node);
- if $Q_k=\forall$, the adversary chooses (AND-node).  
At the leaf, $x_\top$ wins iff $\varphi$ is satisfied by the induced assignment.  
Thus $i^\star$ can guarantee $\ttutil_{i^\star}\!\ge\!U$ iff $\Phi$ is true.

The construction uses $O(p+q)$ voters and politicians and is polynomial in $|\Phi|$.  
Hence \textsc{SeqDS-$\ge U$} is PSPACE-hard. Combined with membership, it is PSPACE-complete.  
For fixed $p$, the alternation depth is $O(p)$, yielding the stated $\Sigma_{O(p)}^{\mathbf P}$ bound.
\end{proof}

The sequential dominance framework already captures the essential source of hardness: 
each stage introduces one quantifier alternation, mirroring the structure of a quantified Boolean formula.  
Equilibrium reasoning, however, does not fundamentally simplify this picture.  
In a turn-based extensive form, verifying subgame-perfection simply requires backward induction 
over the same alternating stages, while hardness stems from the same quantifier nesting.  
Hence the complexity of equilibrium existence grows in exactly the same way as dominance verification, 
culminating in the following theorem.

We reuse the \emph{environment/clause voters with cyclic fallback lists} and the two
top-level parties $\Aodd=\{a_1,a_3\}$, $\Aeven=\{a_2,a_4\}$ with fixed tie-breaking
$a_1\succ a_2\succ a_3\succ a_4$ exactly as in Theorem~\ref{thm:NE-exist}.
No new voter types are introduced; we only stage the same construction.

\begin{theorem}[SPNE existence reusing the NE gadget]\label{thm:SPNE-PSPACE}
Under FPTP with fixed tie-breaking and list-based GE behaviour,
deciding whether a pure subgame-perfect equilibrium (SPNE) exists is
$\Sigma_{O(p)}^{\mathbf P}$-complete for fixed number of stages $p$,
and PSPACE-complete when $p$ is part of the input.
\end{theorem}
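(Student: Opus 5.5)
Membership and hardness are handled separately, following the template of Theorems~\ref{thm:NE-exist} and~\ref{thm:seq-ds-pspace}.

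\emph{Membership.} I would decide SPNE existence by backward induction over the depth-$p$ tree of nominee prefixes. At a history $h_k\in C_{<k}$, the strategic voters play a simultaneous stage game. Its action profiles are the ballot tuples of $V_{\mathrm{strat}}$, which is polynomially many when $|V_{\mathrm{strat}}|$ is constant and at worst exponentially many. Its payoffs are the continuation values of the subgames reached.

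A pure SPNE exists iff the following recursive predicate holds: there is a selection of pure stage equilibria at every history that is consistent with the continuation values chosen below it. Because continuation values can differ across equilibria of a subgame, the predicate carries the target outcome as a parameter. Concretely, $E(h_k,w)$ holds iff there exist stage ballots $\beta$ leading to $h_{k+1}$ with $E(h_{k+1},w)$, and for every voter $i$ and deviation $\beta'_i$, the deviated subgame has some SPNE outcome $w'$ with $E(\cdot,w')$ and $u_i(w')\le u_i(w)$. The base case at a leaf $c$ simply checks $w=\mathcal{W}(c)$.

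Each level therefore adds a constant number of alternations ($\exists\beta\,\forall(i,\beta'_i)\,\exists w'$). Unfolding the recursion gives a $\Sigma^{\mathbf P}_{O(p)}$ formula for constant $p$, and a polynomial-depth DFS with polynomial space for general $p$, exactly as in the membership part of Theorem~\ref{thm:seq-ds-pspace}.

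\emph{Hardness.} I would reduce from QBF $\Phi=Q_1X_1\cdots Q_pX_p\,\varphi$. Each variable becomes a party $\{x_k,\neg x_k\}$, and its primary is staged in order. $\vo$ controls the $\exists$-stages and $\ve$ the $\forall$-stages; tie-breaking makes the unique controller pivotal. After the variable stages, add one final stage containing the two top-level primaries $\Aodd$, $\Aeven$, with $\vo$ choosing $a_o$ and $\ve$ choosing $a_e$ simultaneously. The clause voters with cyclic fallback lists and the $+1$ baseline for $\Aodd$ are reused verbatim from Theorem~\ref{thm:NE-exist}.

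The last-stage subgame at a full assignment $c$ then behaves as follows. If $\varphi(c)$ holds, all clause voters vote for literals and odd wins regardless of $(a_o,a_e)$, so a pure stage equilibrium exists with value ``odd wins''. If $\varphi(c)$ fails, the fallback produces the matching-pennies game of Example~\ref{EXM:NO-NE-Kappa}, which has no pure equilibrium.

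\emph{Main obstacle.} The difficulty is that an SPNE must be an equilibrium in \emph{every} subgame, including those off the path. Any falsifying leaf anywhere in the tree would therefore kill SPNE existence outright, independent of the quantifiers. My first attempt to fix this is to give each adversarial stage a third ``escape'' option and a rescue gadget. The plan is to append, after each variable primary $k$, an extra single-controller party whose nominee activates, through the GE lists, a bloc that fixes the winner and neutralizes the clause voters. Off-path, the controller can then always secure a determinate outcome, which restores a pure equilibrium in every subgame.

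Payoffs are arranged so that at an $\exists$-stage $\vo$ strictly prefers the ``odd wins'' branch over escaping, while at a $\forall$-stage $\ve$ escapes only into an outcome worse for her than exposing a falsifying branch. A falsifying branch must still yield nonexistence, so the rescue option must be unavailable at the final top-level stage. I would then argue by induction on $k$ that the subgame at $h_k$ has an SPNE iff the residual formula $Q_kX_k\cdots\varphi[h_k]$ is true. Calibrating the rescue utilities so that the escape neither always succeeds nor always fails is the delicate step. If it cannot be done cleanly, my fallback is to weaken the target to ``an SPNE exists in which odd wins'', which matches Theorem~\ref{thm:seq-ds-pspace} directly.
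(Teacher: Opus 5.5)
Your hardness argument has a genuine gap, and the escape-gadget repair you sketch cannot close it. With the simultaneous $(a_o,a_e)$ stage placed \emph{last}, every full assignment is a reachable history. At any assignment falsifying $\varphi$, that last stage is matching pennies and has no pure equilibrium. So, without escape parties, your instance has a pure SPNE iff $\varphi$ is valid, which ignores the quantifier prefix. Adding escape options earlier does not help, because adding actions never deletes histories. Consider the history in which every controller declines the escape and the variable choices falsify $\varphi$. It is still a subgame, its last stage is still matching pennies, and an SPNE must be an equilibrium there too. Making the escape compulsory would destroy the encoding. Your fallback target, ``an SPNE in which odd wins'', is a different problem (essentially Theorem~\ref{thm:seq-ds-pspace}). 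With your staging its answer is also ``no'' whenever $\varphi$ is falsifiable, so it fails as well.

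The missing idea is to put the simultaneous top-level stage at the \emph{root}, before all variable stages. Keep both top-level primaries in one stage, as you already do. Every proper subgame is then a perfect-information game with one controller per stage, so it has a pure SPNE by backward induction. Since $\vo$ and $\ve$ only care which top-level party wins (take zero or small costs), every SPNE of such a subgame is won by the player holding a winning strategy. By the gadget's fallback pattern, after $(a_1,a_2)$ or $(a_3,a_4)$ odd wins no matter what, while after $(a_1,a_4)$ or $(a_3,a_2)$ odd wins iff $\Phi$ is true. Hence, with these continuation values, the root stage game behaves as follows. If $\Phi$ is true, it is constant in the winner, so it has a pure equilibrium and an SPNE exists. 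If $\Phi$ is false, it is exactly matching pennies, so no pure SPNE exists. No escape gadget is needed.

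For comparison, the paper's proof uses the same staged QBF encoding with the reused gadget. However, it does not say when $a_o,a_e$ are chosen. Its ($\Leftarrow$) direction only observes that $\ve$ can profitably deviate from an $\Aodd$-winning profile, which by itself does not exclude an SPNE in which $\Aeven$ wins. The off-path issue you identified is real, and it is the root placement that makes both directions go through.

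A smaller point on membership: in $E(h_k,w)$ you choose $w'$ after $(i,\beta'_i)$. In an SPNE, however, the continuation at a given next nominee prefix is a single outcome faced by \emph{every} voter who deviates to reach it. Quantify instead as $\forall c'\,\exists w'$ with $E((h_k,c'),w')$, such that $u_i(w')\le u_i(w)$ for all $(i,\beta'_i)$ leading to $c'$. The number of alternations per level stays constant, so the rest of your membership argument stands.
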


\begin{proof}
\emph{Membership.} The same depth-first evaluation argument (cf.\ Fig.~\ref{fig:tree} and Theorem~\ref{thm:seq-ds-pspace}) applies. 
Given a polynomially bounded horizon and polynomial-time evaluable GE outcomes, verifying subgame-perfect optimality along the sequential tree requires only polynomial space. 
Hence \textsc{SPNE-Exist} is in PSPACE, and for constant~$p$ the induced alternation yields membership in $\Sigma_{O(p)}^{\mathbf P}$.

\emph{Hardness (from \textsc{QBF}).}
Let $\Psi=\exists X_1\,\forall Y_1\,\cdots\,\exists X_k\,\forall Y_k:\varphi(X,Y)$.
Create $p=2k$ stages in order $X_1,Y_1,\ldots,X_k,Y_k$.
For each variable $z\in X\cup Y$, make a party with nominees $\{z,\neg z\}$.
At stage $X_i$, $\vo$ chooses $x_i$ or $\neg x_i$; at stage $Y_i$, $\ve$ chooses $y_i$ or $\neg y_i$.
Keep the same background as in Theorem~\ref{thm:NE-exist}:
(i) $\Aodd$ has baseline GE margin $+1$; 
(ii) for each clause $C_t=(\ell_{t1}\vee\ell_{t2}\vee\ell_{t3})$, include clause voters $v_{t1},\dots,v_{t4}$ with lists
$L_{ti}=(\ell_{t1},\ell_{t2},\ell_{t3},a_i,a_{i+1},a_{i+2},a_{i+3})$ (indices mod~4),
who select the first finalist on their list.

\textbf{Correctness.}
(\(\Rightarrow\)) If $\Psi$ is true, there exists a \emph{history-contingent plan} for the $X$-stages:
for each $i$, a rule that maps the previously chosen $(Y_1,\ldots,Y_{i-1})$ to a choice of $x_i$ or $\neg x_i$
such that $\varphi$ holds for all $Y$-choices. Let $\vo$ follow this plan.
Then along any sequence of $Y$-moves, every clause is satisfied, clause voters never fall back, and the baseline $+1$
elects $\Aodd$ at the GE. Hence neither $\vo$ (already winning) nor $\ve$ (cannot induce a fallback) can profitably deviate
in any subgame; backward induction yields a pure SPNE.

(\(\Leftarrow\)) If $\Psi$ is false, then against any $X$-stage choices there exists a first $Y_i$ whose choice falsifies some clause.
At that subgame, the corresponding clause voters fall back cyclically, yielding the $3{:}1$ split that—together with the baseline—flips
the GE winner to $\Aeven$ (as in Thm.~\ref{thm:NE-exist}). Thus $\ve$ has a profitable deviation, contradicting subgame-perfection.
Therefore no pure SPNE exists.

Hence a pure SPNE exists iff $\Psi$ is true. The instance size is polynomial.
For fixed $p=2k$ this gives $\Sigma_{O(p)}^{\mathbf P}$-hardness; with unbounded $p$, PSPACE-hardness.
Combined with membership, the theorem follows.
\end{proof}

\noindent\textbf{Summary.}
The sequential model generalizes the simultaneous setting by introducing stagewise conditioning.  
When all primaries are resolved simultaneously—or sequentially but with ex-ante commitment to complete strategies, before any primary outcomes are observed—dominance and equilibrium verification involve only a single existential–universal alternation and thus remain within $\Sigma_2^{\mathbf P}$.  
Allowing voters to condition on earlier outcomes adds one quantifier block per stage, giving $\Sigma_{O(p)}^{\mathbf P}$ complexity for fixed $p$, and PSPACE-completeness once $p$ is part of the input.  
Together with Theorem~\ref{thm:NE-exist} and Theorem~\ref{thm:seq-ds-pspace}, this yields a unified view: sequential conditioning acts as a quantifier-expanding mechanism linking the polynomial hierarchy and PSPACE.

\section{Conclusion}

We presented the first formal characterization of the computational complexity of primary elections under both simultaneous and sequential settings.  
In the ex-ante (commitment) model, equilibrium and dominance verification involve one existential–universal alternation, yielding $\Sigma_2^{\mathbf P}$-completeness.  
Allowing voters to condition on earlier outcomes across $p$ stages introduces additional quantifier blocks, giving $\Sigma_{O(p)}^{\mathbf P}$ complexity for fixed $p$ and PSPACE-completeness once the number of stages becomes part of the input.

Beyond these classifications, the staged primary process emerges as a \emph{quantifier-expanding mechanism} that bridges the polynomial hierarchy and PSPACE.  
This unifies several equilibrium notions — dominant strategies, Nash equilibria, and subgame-perfect reasoning — under a common structural lens, showing how temporal sequencing amplifies computational difficulty even under simple voting rules.  
From a multi-agent perspective, this connects election design to the broader challenge of reasoning about sequential coordination among boundedly rational agents.

Our results suggest a form of \emph{complexity shield}: certain strategically undesirable behaviours are computationally hard to implement in the worst case.
At the same time, this hardness depends on parameters such as the number of parties and primary stages, which may be small in real elections and thus reduce complexity. Conversely, real voters often face richer information environments—such as observing primary tallies, forming beliefs about others’ behaviour, or dealing with timing uncertainty, which can further increase strategic complexity.
These issues are beyond the scope of this work but point to promising directions for future research.

\section{Future Work}

Our results only begin to capture the strategic and institutional richness of primary elections.  
Several directions remain open.

\paragraph{Richer strategic and institutional models.}  
Extending the framework to mixed strategies would illuminate probabilistic reasoning and equilibrium selection in competitive settings.  
Further, while our hardness results apply generally, exploring restricted domains, such as closed or open primaries, may uncover tractable subclasses.  
Alternative voting mechanisms, including run-off and ranked-choice systems, could lead to qualitatively different complexity boundaries.  
Some countries even integrate polling data into nomination outcomes, suggesting hybrid models that couple social choice with statistical inference.

\paragraph{Information, dynamics, and candidate behaviour.}  
Our sequential model assumes perfect information and static preferences.  
Relaxing these assumptions to incorporate noisy or incomplete information, belief updates, or adaptive learning would link primary elections to models of bounded rationality and dynamic games. Another promising direction is to introduce candidate positioning: when politicians choose spatial platforms strategically, primaries may reshape equilibrium positions and polarization patterns.

\paragraph{AI and simulation-based exploration.}  
Agent-based simulations offer a natural complement to our theoretical results, enabling empirical study of strategic adaptation under bounded rationality.  
Combining such simulations with reinforcement learning or evolutionary dynamics could reveal when complex equilibria are approximable in practice, linking computational hardness with observable collective behaviour.

Overall, this work takes a first step toward a unified theory of primaries within computational social choice, clarifying how sequential collective decisions reshape both strategic reasoning and the computational landscape of multi-agent systems.

%% file: appendix.tex
\section{Explosion of Strategy Expression of Full Extensive Form Game}\label{appx:extensive-form}

In $\S$~\ref{sec:framework}, we coarsely estimated the input size required to describe the full extensive-form strategy as $\Theta(|A|^{pn})$. 
We now provide a more detailed calculation.

Suppose the election scheme allows multiple participations, and that primaries occur sequentially. 
Within each primary, voters act simultaneously—thus, a voter cannot condition their choice on other voters’ actions within the same primary. 
However, the information revealed between primaries determines the structure of the extensive-form tree.

For each extensive-form strategy, one must specify a choice at every information set where the voter acts. 
Hence, the encoding space of a single strategy grows in proportion to the number of distinct decision nodes that voter may face.

We distinguish two informational assumptions:

\begin{description}
    \item[(Case 1)] \textbf{Voters observe all previous primary outcomes (perfect inter-stage information).}
    \item[(Case 2)] \textbf{Voters only observe the aggregate vote counts from previous primaries (coarse inter-stage information).}
\end{description}

\paragraph{Case 1.}
Without loss of generality, consider voter~$n$.
In the first primary (party~1), each voter has no prior information, and thus only one decision node.
Each voter’s action space is $A_1 \cup \{0\}$, where $0$ denotes abstention.
When voter~$n$ acts, there are $|m_1 + 1|^{n-1}$ possible configurations of the other voters’ actions.

Proceeding sequentially, suppose after party~$k$’s primary the current game tree has $\ell$ nodes.
In the next primary (party~$k{+}1$), each of the $n$ voters again chooses from $A_{k+1} \cup \{0\}$, yielding
$\ell \times |m_{k+1} + 1|^{n-1}$ new nodes.
By induction, the total number of decision nodes before the general election is
\[
\prod_{k=1}^{p} |m_k + 1|^{n-1} = \biggl(\prod_{k=1}^{p} |m_k + 1|\biggr)^{n-1}.
\]
By the AM--GM inequality,
\[
\prod_{k=1}^{p} |m_k + 1|
\;\leq\;
\biggl(\frac{\sum_{k=1}^{p} (m_k + 1)}{p}\biggr)^{p}
= \Bigl(\frac{m + p}{p}\Bigr)^{p},
\]
with equality when all $m_k$ are equal.
Hence the number of decision nodes is
\[
\Theta\!\left(\Bigl(\frac{m}{p}\Bigr)^{pn}\right).
\]
If $p$ is treated as a constant, this simplifies to $\Theta(|A|^{pn})$.

\paragraph{Case 2.}
Now assume voters only observe the aggregate \emph{vote counts} from previous primaries.
In this case, the number of distinct outcome profiles generated by other voters in party~$k$’s primary equals
\[
H^{n-1}_{m_k}
\;=\;
\binom{(n-1) + m_k - 1}{m_k},
\]
the number of \emph{combinations with repetition} (or multisets of size $m_k$ drawn from $n{-}1$ voters).
This quantity is asymptotically
\[
\Theta\bigl((n + m_k)^{m_k}\bigr)
\;=\;
\Theta\bigl(n^{m_k} + m_k^{m_k}\bigr).
\]
Therefore, the total number of decision nodes across all primaries is
\[
\Theta\!\left(\prod_{k=1}^{p} (n^{m_k} + m_k^{m_k})\right)
= \Theta\!\left(n^{m} + \prod_{k=1}^{p} m_k^{m_k}\right).
\]
Since $n > m$ and each $m_k \leq m$, the dominant term is $\Theta(n^{m})$.

%\medskip
In summary, although the explosion in input size arises from different sources—
party size and voter count in Case~1, versus total candidate count in Case~2—
both cases exhibit exponential growth in the parameters $(n, m, p)$.
In the computational problems considered in this paper, neither case yields a tractable representation once $p$ scales with input size.

\section{Case of No Equilibrium Even Without Participation Cost}\label{appx:no-kappa}

Relative to Example~\ref{EXM:NO-NE-Kappa}, we now  
(i) extend each party’s primary to three candidates,  
(ii) add a GE-only candidate~$a_7$ that both partisans strictly rank last, and  
(iii) adopt a fixed \emph{cyclic tie-breaking rule} within each primary, where ties are resolved along a pre-determined cycle (e.g., $a_1 \succ a_3 \succ a_5 \succ a_1$).  

The outsider~$a_7$ ensures that both parties share an outcome they wish to avoid, while the cyclic tie-breaking guarantees that in every primary configuration at least one partisan has a profitable deviation, thereby eliminating pure equilibria.

\begin{example}\label{EXM:NO-NE}
\textbf{Nonexistence of equilibrium under cyclic tie-breaking (no~$\kappa$).}

Consider FPTP with
\(A_{\mathrm{odd}}=\{a_1,a_3,a_5\}\),
\(A_{\mathrm{even}}=\{a_2,a_4,a_6\}\),
and a GE-only candidate \(a_7\).
Six cyclic voters \(v_1,\dots,v_6\) with list-form preference $L_i = [{a_{i},a_{i+1}, a_{i+2}, a_{1+3},a_{i+4},a_{i+5}}]  \text{ (mod~6)}$  and five fixed $a_7$-supporters never join primaries.
Two partisan voters remain strategic, each preferring their party’s nominees over the rival’s, and all over~$a_7$.

Within each primary, two-way ties are resolved by a fixed cycle:
\(a_1 \succ a_3 \succ a_5 \succ a_1\) for odd, and
\(a_2 \succ a_4 \succ a_6 \succ a_2\) for even.
As shown in Figure~\ref{fig:cyclic-voting}, whenever the finalists include
\((a_i, a_{i\pm3}, a_7)\), the outsider \(a_7\) wins. 
Both partisans strictly prefer to avoid this and can exploit the cyclic tie-breaking 
to alter nominees and improve their outcome. 
By symmetry, from every pure profile at least one partisan has a profitable deviation.
Hence no pure Nash equilibrium exists.
\end{example}

\begin{figure}[ht]
    \centering
       \includegraphics[width=0.25\textwidth, angle=0]{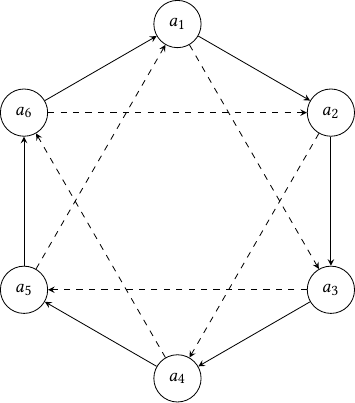}
    \caption{Illustration of Example~\ref{EXM:NO-NE}.  
    Solid arrows: if $c=(x,y,a_7)$, then $x$ wins the GE.  
    Dashed arrows: cyclic tie-breaking in primaries ($w\!\to\! z$ means $w$ defeats $z$ in a tie).}
    \Description{{ Illustration of Example~\ref{EXM:NO-NE}. Solid arrows indicate general-election outcomes: if $c = (x,y,a_7)$, then $x$ wins the GE.  Dashed arrows denote cyclic tie-breaking in primaries ($w{\rightarrow}z$ means $w$ defeats $z$ in a tie).}}
    \label{fig:cyclic-voting}
\end{figure}

%\begin{figure}[ht]
    %\centering
    %\resizebox{.8\linewidth}{!}{\input{Figures/Illustration_Completed_tree}}
    %\caption{Game tree of two voters in a sequential primary. Nodes labelled %\emph{1’s} and \emph{2’s} represent party~1’s and party~2’s primaries, respectively. At each node, both voters simultaneously choose ballots $(a_i,a_j)$.}
    %\Description{ The figure shows a horizontal game tree with three stages. The leftmost node, labelled “1’s”, represents Party 1’s primary election. From it, several arrows branch rightward, each labelled with a joint ballot such as $(a_1,a_1)$ or $(a_2, a_2)$. These arrows connect to three vertically stacked nodes, all labelled “2’s”, showing Party 2’s primary after different stage-1 outcomes.Between the 2’s nodes is a vertical ellipsis indicating additional possibilities. From the middle “2’s” node, a dashed continuation marked “...” leads rightward to a node labelled “GE” (general election).    The GE node branches into two terminal circles, labelled $(c_1,c_1)$ and $(c_p,c_p)$, with a vertical ellipsis between them to indicate further outcomes.The diagram thus depicts how the sequential primary process unfolds: Party 1 acts first, Party 2 responds based on observed history, and the process concludes in a general election.    }
    %\label{fig:tree}
%\end{figure}